%% file: 0main.tex
\documentclass{article}
\usepackage{ijcai26}

\usepackage{times}
\usepackage{soul}
\usepackage{url}
\usepackage[hidelinks]{hyperref}
\usepackage[utf8]{inputenc}
\usepackage[small]{caption}
\usepackage{graphicx}
\usepackage{algorithm}
\usepackage[switch]{lineno}

\usepackage{algpseudocode}
\usepackage{amsfonts}
\usepackage{amsthm,amsmath,amssymb}
\usepackage{thmtools}
\usepackage{thm-restate}
\usepackage{pifont}
\usepackage{multirow}
\usepackage{booktabs}
\usepackage{xcolor}

\usepackage{subfigure}
\usepackage{diagbox}
\usepackage{dblfloatfix}

\input{math_command.tex}

\title{Scalable Fair Influence Blocking Maximization via Approximately Monotonic Submodular Optimization}

\author{
Qiangpeng Fang$^1$
\and
Jilong Shi$^1$
\and
Xiaobin Rui$^{1,2}$
\and
Jian Zhang$^{1,2}$
\And
Zhixiao Wang$^{1,2,*}$
\affiliations
$^1$China University of Mining and Technology, Xuzhou 221116, China\\
$^2$Mine Digitization Engineering Research Center of the Ministry of Education, Xuzhou 221116, China\\ 
\emails
\{fangqp,jlshi,ruixiaobin,zhangjian10231209,zhxwang\}@cumt.edu.com
}

\begin{document}

\maketitle

\begin{abstract}
Influence Blocking Maximization (IBM) aims to select a positive seed set to suppress the spread of negative influence.
However, existing IBM methods focus solely on maximizing blocking effectiveness, overlooking fairness across communities.
To address this issue, we formalize fairness in IBM and justify Demographic Parity (DP) as a notion that is particularly well aligned with its semantics.
Yet enforcing DP is computationally challenging:
prior work typically formulates DP as a Linear Programming (LP) problem and relies on costly solvers, rendering them impractical for large-scale networks.
In this paper, we propose a DP-aware objective while maintaining an approximately monotonic submodular structure, enabling efficient optimization with theoretical guarantees.
We integrate this objective with blocking effectiveness through a tunable scalarization, yielding a principled fairness–effectiveness trade-offs.
Building on this structure, we develop CELF-R, an accelerated seed selection algorithm that exploits approximate submodularity to eliminate redundant evaluations and naturally supports Pareto front construction.
Extensive experiments demonstrate that CELF-R consistently outperforms state-of-the-art baselines, achieving a $(1-1/e-\psi)$-approximate solution while maintaining high efficiency.
\end{abstract}

\input{1Introduction}
\input{2Related}

\input{3Preliminary}

\input{4Objective}
\input{5Method}
\input{6Experiment}
\input{7Conclusion}

\appendix

\section*{Ethical Statement}

There are no ethical issues.

\section*{Acknowledgments}
% This work was supported by the Basic Research Program of Jiangsu Province (No. BK20242084) and the National Natural Science Foundation of China (No. 62402496).

% The preparation of these instructions and the \LaTeX{} and Bib\TeX{}
% files that implement them was supported by Schlumberger Palo Alto
% Research, AT\&T Bell Laboratories, and Morgan Kaufmann Publishers.
% Preparation of the Microsoft Word file was supported by IJCAI.  An
% early version of this document was created by Shirley Jowell and Peter
% F. Patel-Schneider.  It was subsequently modified by Jennifer
% Ballentine, Thomas Dean, Bernhard Nebel, Daniel Pagenstecher,
% Kurt Steinkraus, Toby Walsh, Carles Sierra, Marc Pujol-Gonzalez,
% Francisco Cruz-Mencia and Edith Elkind.

%% The file named.bst is a bibliography style file for BibTeX 0.99c
\bibliographystyle{named}
\bibliography{ijcai26}

\clearpage
\input{8Appendix}

\end{document}

%% file: math_command.tex
\newtheorem{definition}{Definition}

\newcommand{\R}{\mathbb{R}}

\newcommand{\cC}{\mathcal{C}}

\newcommand{\J}{\mathcal{J}}
\renewcommand{\L}{\mathcal{L}}
\newcommand{\D}{\mathcal{D}}
\newcommand{\M}{\mathcal{M}}

%% file: 1Introduction.tex
\section{Introduction}
% Influence Blocking Maximization (IBM)~\cite{He2012}, a counterpart to Influence Maximization (IM)~\cite{Domingos2001,Kempe2003}, aims to select a positive seed set $S_P$ (budget $k$) to effectively block influence spread from negative seed set $S_N$.
Influence Blocking Maximization (IBM)~\cite{He2012}, a counterpart to Influence Maximization (IM)~\cite{Kempe2003}, aims to select a positive seed set to effectively block influence spread from a negative seed set.
It has a wide range of applications, including rumor suppression~\cite{rumor}, competitive marketing~\cite{marketing}, epidemic control~\cite{epidemic}, {\em etc}.
However, maximizing blocking effectiveness alone can lead to substantial unfairness~\cite{Gu2025,Sun2023}, where some communities receive far less protection than others.

To address this issue, previous works~\cite{Saxena2024} in social network analysis have introduced various notions of fairness, including Max-Min Fairness (MMF)~\cite{Fish2019,Tsang2019}, Welfare Fairness (WF)~\cite{Rahmattalabi2021}, Concave Fairness Framework (CFF)~\cite{Wang2025}, Diversity Constraint (DC)~\cite{Tsang2019}, and Demographic Parity (DP)~\cite{DP}.
However, even in small network, different fairness notions yield markedly different levels of protection across communities.

\begin{figure}[!ht]
  \centering
  \includegraphics[width=0.81\linewidth]{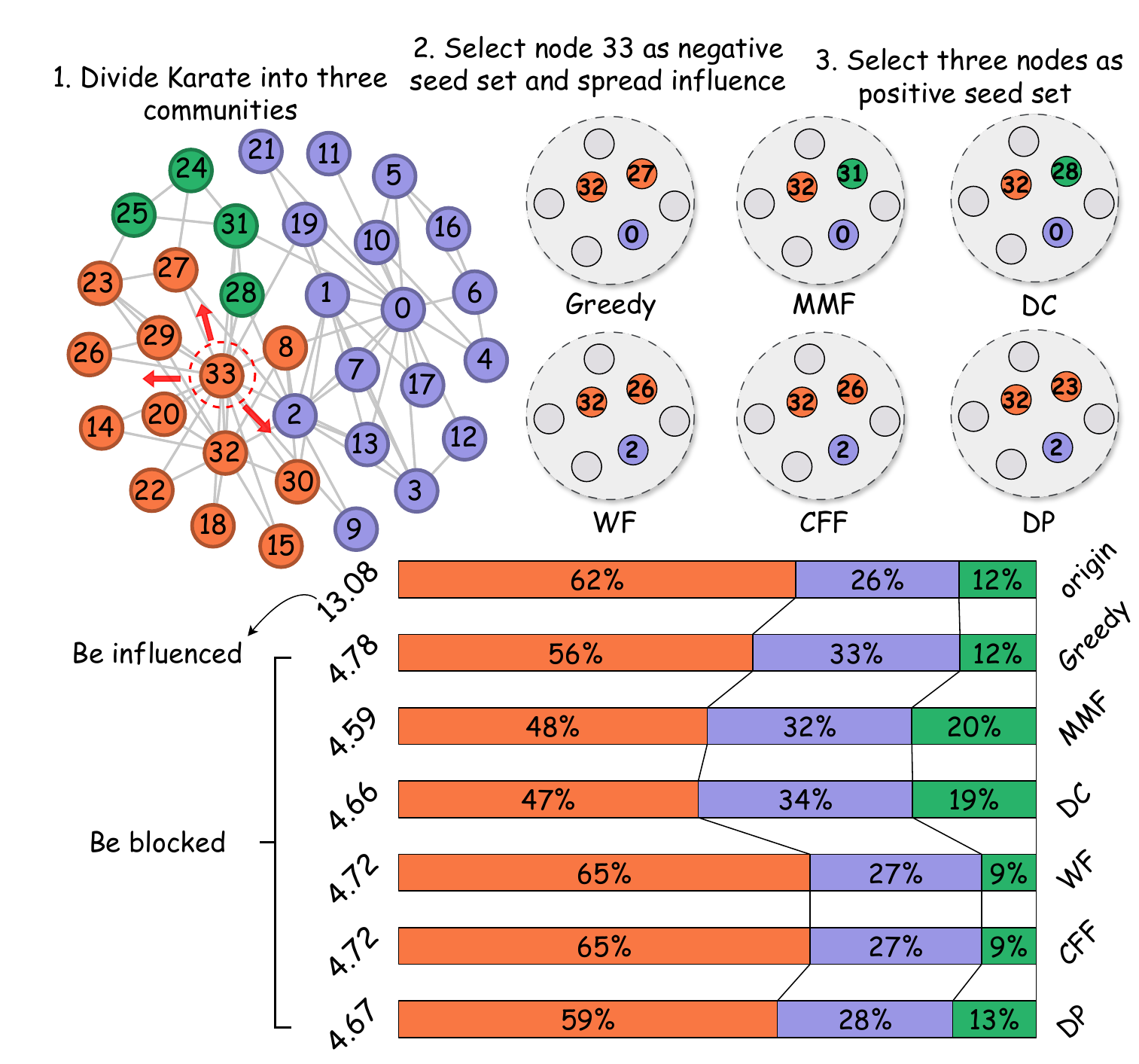}
  \caption{A toy example on Karate Club network.}
  \label{fig:toy example}
\end{figure}

To demonstrate that, we use the Karate Club network (Fig.~\ref{fig:toy example}) as an example.
We select node 33 (the highest-degree node) as the negative seed set, then compute the positive seed set with a budget of 3 according to each fairness notion.
Greedy maximizes the blocking effectiveness without considering fairness.
MMF and DC place one positive seed in green to improve its protection.
However, the final distributions (48\%–32\%–20\% for MMF, 47\%-34\%-19\% for DC) remain imbalanced, with MMF bearing a relatively higher cost as a result.
These shortcomings become even more pronounced as the number of communities increases.
WF and CFF identify the same positive seed set, achieving a reasonably fair allocation.
Yet their results (65\%–27\%–9\%) are still significantly more skewed than the distribution obtained by DP.
They tend to choose well-connected nodes since these can provide relatively good blocking effectiveness and fairness.
However, the less-connected communities ({\em e.g.} the green one) are often overlooked. 
In contrast, DP yields the most balanced distribution (59\%–28\%–13\%), effectively correcting the structural imbalance already present in the original influence spread (62\%–26\%–12\%).

This observation highlights two important principles for fair IBM:
(i) sparsely connected or minority communities should not be disadvantaged; and
(ii) the intervention should not exacerbate existing disparities in negative exposure.
% DP naturally satisfies both, making it a principled fairness criterion for IBM.
DP naturally fits both, making it a principled notion for IBM.

Despite its advantages, DP is difficult to optimize since it lacks an explicit objective function.
Previous work enforces DP via strict constraints, typically formulated and solved as a Linear Programming problem using costly solvers~\cite{Becker2023}.
Such LP-based solutions incur prohibitive computational costs and do not scale to large-scale networks.
This motivates the need for a DP-aware IBM framework that is both theoretically grounded and computationally efficient.

To the best of our knowledge, this is the first work to study Fair Influence Blocking Maximization (FIBM) under DP fairness, providing a scalable solution framework.
Our contributions are summarized as follows:
\begin{itemize}
\setlength{\itemsep}{0pt}
\setlength{\parsep}{0pt}
\item We formalize fairness in IBM and justify why DP is particularly well aligned with its semantics, avoiding the pitfalls observed in other fairness notions.

% \item We introduce an objective that preserves the meaning of DP and prove that it is approximately monotonic submodular, enabling efficient optimization with guarantees.
% Based on this structure, we develop CELF-R, an accelerated selection method that achieves accuracy comparable to full evaluation while eliminating nearly all redundant computations.
\item We introduce a DP-aware objective and prove that it is approximately monotonic submodular, enabling efficient optimization with theoretical guarantees.
It is combined with blocking effectiveness objective through a tunable linear scalarization, allowing flexible control over fairness-effectiveness trade-offs.

% \item Since strict DP may significantly reduce blocking performance, we integrate DP fairness and blocking effectiveness via a tunable linear combination, yielding a flexible spectrum of solutions on the Pareto front.
\item We develop CELF-R, a scalable greedy algorithm tailored to approximately monotonic submodular objectives.
By exploiting bounded deviations in marginal gains, CELF-R achieves high efficiency while retaining solution quality, and naturally supports the construction of Pareto fronts under different trade-off preferences.

% \item Compared with LP-based baselines, our method is substantially more efficient and scalable, as demonstrated by extensive experiments on real-world networks.
% \item Extensive experiments on real-world networks demonstrate that our method  achieves superior trade-offs compared to other baselines while being substantially more efficient and scalable than LP-based methods.
\item Extensive experiments demonstrate that our method consistently outperforms existing baselines, achieving a $(1-1/e-\psi)$-approximate solution while being substantially more scalable than LP-based methods.
\end{itemize}

%% file: 2Related.tex
\section{Related Work}
\subsection{Influence Blocking Maximization}
Influence Maximization (IM)~\cite{Domingos2001} was first formulated as a discrete optimization problem by Kempe {\em et al.}~\cite{Kempe2003}, inspiring a large body of work on scalable algorithms such as RIS-based methods~\cite{IMM}.
In contrast to IM, many real scenarios—rumor containment~\cite{rumor}, competitive marketing~\cite{marketing}, and epidemic control~\cite{epidemic}—require suppressing rather than spreading influence, giving rise to the Influence Blocking Maximization (IBM) problem~\cite{He2012}.

Existing IBM approaches~\cite{ibm1,ibm2,Gu2025} improve blocking effectiveness using heuristics, sampling, or fast influence estimators.
However, these methods implicitly assume that all communities benefit equally from intervention, neglecting disparities in how different groups face and resist negative influence—an issue central to fair decision-making.

\subsection{Fairness Notions}
Previous works~\cite{Saxena2024,Dong2023} in social networks have introduced various fairness notions.
Max-Min fairness (MMF)~\cite{Fish2019,Tsang2019} protects the worst-off group but often sacrifices substantial overall utility.
Diversity constraints (DC)~\cite{Tsang2019} require group-level improvement relative to proportional baselines, yet depend heavily on internal connectivity and break down in sparse or imbalanced communities.
Welfare-based notions ({\em e.g.}, WF~\cite{Rahmattalabi2021}, CFF~\cite{Wang2025}) aggregate utilities through concave transformations with adjustable inequality aversion.
However, Rahmattalabi {\em et al.}~\cite{Rahmattalabi2021} point out that it cannot guarantee utility gap reduction under general network structures with inter-community connections.
Moreover, it tends to favor structurally advantaged groups, thus overlooking sparsely connected or minority communities.

Demographic Parity (DP)~\cite{Stoica2019} provides a structure-agnostic criterion by directly bounding inter-group utility gaps.
It neither relies on community topology nor exacerbates pre-existing disparities, making it particularly suitable for IBM, where negative exposure differs widely across communities.
Yet enforcing DP remains challenging: prior work typically encodes DP via linear programming constraints~\cite{Becker2023}, which scale poorly and lack compatibility with greedy or submodular-based optimization methods essential for large networks.

%% file: 3Preliminary.tex
\section{Preliminaries}
We model a social network as a directed graph $G=(V,E)$, where each node $v\in V$ belongs to one community in a disjoint partition $\mathcal{C}=\{c_1,\dots,c_{|\mathcal{C}|}\}$.
Each edge $(u,v)$ carries an influence weight $p_{u,v}$.
% We adopt the Linear Threshold (LT) model~\cite{Kempe2003}, a standard diffusion model in influence blocking studies~\cite{Sun2023,He2012}.
In this paper, we select the Linear Threshold (LT)~\cite{Kempe2003} model as the diffusion model, which is widely used in literature~\cite{Sun2023,LT1,He2012}.
Let $\sigma(S,G)$ denote the expected influence spread of seed set $S$, and $\sigma_c(S,G)$ the spread restricted to community $c$.

\subsection{Fair Influence Blocking Maximization}
Let $S_N$ be the negative seed set. A positive seed set $S_P$ acts purely as immunized nodes: they neither receive nor transmit influence.
The blocking effectiveness is
\begin{equation}
\sigma^-(S_P)=\sigma(S_N,G)-\sigma(S_N,G\setminus S_P),
\end{equation}
and $\sigma_c^-(S_P)$ is defined analogously for community $c$.

Classical IBM seeks:
\begin{equation}
S^\#_P=\mathop{\arg\max}\limits_{S_P \subseteq V \backslash S_N, |S_P| \leq k} \sigma^-(S_P).
\end{equation}

To enforce fairness, we integrate DP by requiring that communities achieve similar blocked ratios,
\begin{equation}
\frac{\sigma_c^-(S_P)}{\sigma_c(S_N,G)}.
\end{equation}

This yields the following formulation.
\begin{definition}[Fair Influence Blocking Maximization, FIBM] \label{def:FIBM}
Given a graph $G=(V,E)$, community partition $\mathcal{C}=\{c_1,\dots,c_{|\mathcal{C}|}\}$, budget $k$, and negative seed set $S_N$, FIBM seeks a positive seed set $S_P$ such that
\begin{equation}
\begin{aligned}
&S^*_P = \mathop{\arg\min}\limits_{S_P \in V\backslash S_N, |S_P|\leq k} ( \max\limits_{c\in\cC} \frac{\sigma_c^-(S_P)}{\sigma_c(S_N, G)}-\min\limits_{c\in\cC} \frac{\sigma_c^-(S_P)}{\sigma_c(S_N, G)} ) \\
&\text{s.t.} \quad 1 - \frac{\sigma^-(S_P^*)}{\sigma^-(S_P^\#)} \leq \mu \\
\end{aligned}
\end{equation}
where $\mu$ controls the allowable loss in blocking effectiveness.
\end{definition}

This formulation captures the inherent trade-offs between fairness and blocking effectiveness, since achieving DP should not excessively degrade overall blocking performance.

% \subsection{Approximate Submodularity and Approximate Monotonicity}
% \subsection{Approximately Monotonic Submodular Optimization}
\subsection{Approximate Structure}
Classical greedy methods rely on objectives that are monotonic submodular, ensuring diminishing returns and tractable optimization.
However, fairness-aware objectives rarely satisfy these properties exactly.
Prior studies~\cite{Tsang2019} show that MMF and DC are non-submodular, and similar violations arise in DP.

In practice, however, the fairness-aware objective behaves almost monotonic and submodular throughout most of the optimization process: marginal gains decrease predictably, and deviations mainly occur near highly balanced (fair) solutions.
This motivates adopting a relaxed structure that tolerates small discrepancies while retaining the algorithmic advantages of submodularity.

Therefore, we adopt the framework of approximate submodularity~\cite{Krause2010,Qian2019,Singer2018} and approximate monotonicity.
Formally, these approximate properties are defined as
\begin{definition}[$(\kappa, \epsilon)$-approximately monotonic submodular] \label{def:approximately monotonic submodular}
A set function $f: 2^V \to \R$ is called $(\kappa, \epsilon)$-approximately monotonic submodular (with parameters $\kappa \geq 0$ and $\epsilon \geq 0$) if it satisfies the following two conditions simultaneously:
\begin{itemize}
\setlength{\itemsep}{0pt}
\setlength{\parsep}{0pt}
   \item Approximate Monotonicity:$$f(X \cup \{v\}) \ge f(X) - \kappa$$
   
   \item Approximate Submodularity:$$f(X \cup \{v\}) - f(X) \ge f(Y \cup \{v\}) - f(Y) - \epsilon$$
\end{itemize}
for all $X\subseteq Y\subseteq V$ and $v\notin Y$.
\end{definition}
Here, $\kappa$ and $\epsilon$ measure deviations from exact structure and vanish when $f$ is truly monotone submodular.
This relaxed model enables the use of efficient greedy-style algorithms such as CELF-R developed later.

%% file: 4Objective.tex
\section{Objective Function}
In the FIBM problem, the positive seed set $S_P$ must simultaneously (i) suppress negative influence and (ii) distribute protection fairly across communities under Demographic Parity (DP). We therefore construct an objective that captures both fairness and blocking effectiveness while supporting efficient approximate-submodular optimization.

\subsection{Fairness-Aware Objective}
\paragraph{DP fairness and its surrogate formulation.}
Under DP, the share of influence blocked in community $c$ should match its share of the negative exposure:
\begin{equation}
n_c=\frac{\sigma_c(S_N,G)}{\sigma(S_N,G)},\qquad
x_c(S_P)=\frac{\sigma^-_c(S_P)}{\sigma^-(S_P)} .
\end{equation}

Directly enforcing $x_c = n_c$ is combinatorially difficult.
We instead adopt a concave surrogate, a common approach for modeling inequality aversion:
\begin{equation}\label{eq:W}
\mathcal{W}(S_P)=\sum_{c\in\mathcal{C}} r_c (x_c(S_P))^\alpha,\qquad 0<\alpha<1 ,
\end{equation}
with $r_c=n_c^{1-\alpha}$.
Concavity amplifies improvements in under-protected communities, encouraging the blocked proportions $x_c$ to align with the target proportions $n_c$.

\begin{restatable}{lemma}{lemmaWDP}
\label{lemma: W to DP}
Maximizing $\mathcal{W}(\cdot)$ increases proportional alignment between $x_c$ and $n_c$; $\mathcal{W}(\cdot)=1$ corresponds to strict Demographic Parity.
\end{restatable}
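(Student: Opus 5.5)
The plan is to treat $\mathcal{W}$ as a function of the vector $x=(x_c)_{c\in\cC}$. This vector lies on the simplex, since $\sum_c x_c = \sum_c \sigma^-_c(S_P)/\sigma^-(S_P)=1$ when the communities partition $V$ and $\sigma^-(S_P)>0$. The weights likewise satisfy $\sum_c n_c=1$. The lemma then reduces to a statement about a weighted power mean: with $r_c=n_c^{1-\alpha}$,
\[
\mathcal{W}=\sum_{c} n_c^{1-\alpha}x_c^{\alpha}=\sum_c n_c\Big(\frac{x_c}{n_c}\Big)^{\alpha}.
\]
Since $t\mapsto t^\alpha$ is strictly concave for $0<\alpha<1$, Jensen's inequality with probability weights $n_c$ gives
\[
\mathcal{W}\le\Big(\sum_c n_c\frac{x_c}{n_c}\Big)^{\alpha}=\Big(\sum_c x_c\Big)^\alpha=1 .
\]
(One could equally use Hölder's inequality with exponents $1/\alpha$ and $1/(1-\alpha)$.)

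The second step is the equality case. Strict concavity means equality in Jensen holds iff $x_c/n_c$ is constant over all $c$ with $n_c>0$. Summing, that constant must be $1$, so $x_c=n_c$ for every $c$. This is exactly the DP condition stated before Eq.~(\ref{eq:W}). For the connection to Definition~\ref{def:FIBM}, I will check that $x_c=n_c$ means $\sigma_c^-(S_P)/\sigma_c(S_N,G)=\sigma^-(S_P)/\sigma(S_N,G)$ for all $c$. The blocked ratios are then equal across communities, and the max–min gap in the FIBM objective vanishes. This establishes ``$\mathcal{W}=1$ iff strict DP''.

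The third step makes ``increases proportional alignment'' precise. I will write
\[
1-\mathcal{W}=\sum_c n_c\,\phi(x_c/n_c), \qquad \phi(t)=t-t^\alpha-(1-\alpha)(t-1)\ \text{(or simply } 1-t^\alpha+\alpha(t-1)\text{)},
\]
using $\sum_c n_c(x_c/n_c-1)=0$. Here $\phi\ge0$ is convex with its unique zero at $t=1$. Hence $1-\mathcal{W}$ is an $f$-divergence $D_\phi(x\|n)$, equivalently a monotone transform of the Rényi/Tsallis divergence of order $\alpha$ between $x$ and $n$. Maximizing $\mathcal{W}$ therefore means minimizing a divergence between the achieved and target proportions.

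To support the ``amplifies under-protected communities'' intuition, I can also note that $\partial\mathcal{W}/\partial x_c=\alpha (n_c/x_c)^{1-\alpha}$. This is larger exactly when $x_c<n_c$, so any mass transfer from an over-protected community to an under-protected one strictly raises $\mathcal{W}$.

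The main obstacle is the informal phrase ``increases proportional alignment'', which has no unique formal meaning. I will commit to the divergence interpretation above, plus the transfer/gradient argument, as its formalization. I will also handle degenerate cases separately: $\sigma^-(S_P)=0$, where $x$ is undefined and $\mathcal{W}$ is set to $0$, and communities with $n_c=0$, which are excluded from the sums.
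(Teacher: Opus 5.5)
Your proposal is correct, and it takes a different route from the paper.

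\textbf{The paper's route.} The paper forms the Lagrangian $\sum_c r_c x_c^\alpha+\lambda(1-\sum_c x_c)$ and sets $\alpha r_c x_c^{\alpha-1}=\lambda$. After substituting $r_c=n_c^{1-\alpha}$, it concludes that $x_c/n_c$ is constant across communities, so the blocked ratios $\sigma_c^-(S_P)/\sigma_c(S_N,G)$ coincide. This is only a first-order stationarity argument:
\begin{itemize}
\item Global optimality rests on the concavity of $\mathcal{W}$ on the simplex, which the paper leaves implicit; its ``$\lambda$ sufficiently large'' remark does not supply it.
\item Boundary points with some $x_c=0$ are not addressed.
\item The value $\mathcal{W}=1$ is never actually computed.
\item Nothing explains why a larger $\mathcal{W}$ means better alignment away from the optimum.
\end{itemize}

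\textbf{What your route gives.} Your Jensen (or H\"older) step yields the global bound $\mathcal{W}\le 1$ and its exact equality case at once, with no interiority or second-order issues. Your identity $1-\mathcal{W}=\sum_c n_c\,\phi(x_c/n_c)$, with $\phi(t)=1-t^\alpha+\alpha(t-1)$, turns ``increases proportional alignment'' into a real statement. Indeed, $1-\mathcal{W}$ is $(1-\alpha)$ times the Tsallis relative entropy of order $\alpha$, and a monotone transform of the R\'enyi divergence $D_\alpha(x\|n)$.

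\textbf{What the Lagrangian route gives in return.} It offers design insight. For arbitrary weights $r_c$ it yields the optimum $x_c\propto r_c^{1/(1-\alpha)}$, which explains why $r_c=n_c^{1-\alpha}$ is the right calibration. Your argument takes that choice as given.

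\textbf{Points to tighten.}
\begin{itemize}
\item A finite transfer of mass from an over-protected to an under-protected community raises $\mathcal{W}$ only if it does not overshoot, for instance if it keeps $x_{c_1}\ge n_{c_1}$ and $x_{c_2}\le n_{c_2}$. This is because $\mathcal{W}$ is strictly concave along the transfer direction, so ``any'' transfer is too strong.
\item Placing $x$ on the simplex and discarding communities with $n_c=0$ both use $0\le\sigma_c^-(S_P)\le\sigma_c(S_N,G)$. This holds because immunizing nodes never increases negative spread in any community; say so explicitly.
\item For the ``iff'', state the converse direction. If all blocked ratios equal some $\rho$, summing gives $\rho=\sigma^-(S_P)/\sigma(S_N,G)$, and hence $x_c=n_c$.
\end{itemize}
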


\begin{restatable}{lemma}{lemmaAlpha}
\label{lemma: alpha analysis}
when $|\frac{x_c}{n_c}-1|<\phi$, for any $\alpha_1, \alpha_2 \in (0,1)$, $|\mathcal{W}_{\alpha_1}-\mathcal{W}_{\alpha_2}| \leq \frac{|(\alpha_1-\alpha_2)(\alpha_1+\alpha_2-1)|}{2}\phi^2+O(\phi^3)$.
\end{restatable}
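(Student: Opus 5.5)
Since $r_c=n_c^{1-\alpha}$, the objective in Eq.~\eqref{eq:W} can be rewritten as $\mathcal{W}_\alpha=\sum_{c} n_c^{1-\alpha}x_c^\alpha=\sum_c n_c\,(x_c/n_c)^\alpha$. Write $t_c=x_c/n_c-1$, so that $|t_c|<\phi$ for every $c$. This gives
\[
\mathcal{W}_\alpha=\sum_{c\in\cC} n_c (1+t_c)^\alpha .
\]
Two normalizations will be used: $\sum_c n_c=1$ and $\sum_c x_c=1$, since both are partitions of a total over the disjoint communities. Together they give $\sum_c n_c t_c=\sum_c x_c-\sum_c n_c=0$.

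The plan is to expand $(1+t)^\alpha$ by Taylor's theorem around $t=0$, up to second order with a cubic remainder:
\[
(1+t)^\alpha=1+\alpha t+\tfrac{\alpha(\alpha-1)}{2}t^2+R_\alpha(t),\qquad |R_\alpha(t)|\le C|t|^3 .
\]
The constant $C$ is uniform in $\alpha\in(0,1)$ for $|t|\le\phi\le 1/2$, because the third derivative $\alpha(\alpha-1)(\alpha-2)(1+\xi)^{\alpha-3}$ is bounded on that range. Summing against the weights $n_c$, the constant term contributes $\sum_c n_c=1$ and the linear term vanishes by $\sum_c n_c t_c=0$. Hence
\[
\mathcal{W}_\alpha=1+\tfrac{\alpha(\alpha-1)}{2}\sum_c n_c t_c^2+O(\phi^3).
\]
This expansion also recovers Lemma~\ref{lemma: W to DP}, since $\mathcal{W}_\alpha\le 1$ with equality iff all $t_c=0$.

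Subtracting the expansions for $\alpha_1$ and $\alpha_2$ cancels the constant term. For the quadratic coefficient,
\[
\alpha_1(\alpha_1-1)-\alpha_2(\alpha_2-1)=(\alpha_1^2-\alpha_2^2)-(\alpha_1-\alpha_2)=(\alpha_1-\alpha_2)(\alpha_1+\alpha_2-1).
\]
Therefore
\[
|\mathcal{W}_{\alpha_1}-\mathcal{W}_{\alpha_2}|\le \frac{|(\alpha_1-\alpha_2)(\alpha_1+\alpha_2-1)|}{2}\sum_c n_c t_c^2+O(\phi^3).
\]
Using $t_c^2<\phi^2$ and $\sum_c n_c=1$ bounds the weighted sum by $\phi^2$, which gives the stated inequality.

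The main obstacle is making the $O(\phi^3)$ term rigorous. It must be uniform in $\alpha_1,\alpha_2$ and independent of the number of communities; the weights $n_c$ summing to one handle the latter. The other point to get right is the cancellation of the first-order term, which requires the identity $\sum_c x_c=\sum_c n_c=1$. This identity needs $\sigma^-(S_P)>0$ and disjointness of the partition $\cC$, so that $\sigma^-=\sum_c\sigma^-_c$ and $\sigma=\sum_c\sigma_c$. I would state this explicitly at the start of the proof.
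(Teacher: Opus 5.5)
Your proposal is correct and follows essentially the same route as the paper: rewrite $\mathcal{W}_\alpha=\sum_c n_c(1+t_c)^\alpha$, expand to second order, cancel the linear term via $\sum_c n_c t_c=0$, and bound the weighted quadratic term by $\phi^2$; your explicit uniform bound on the cubic remainder for $\phi\le 1/2$ makes rigorous what the paper leaves inside its $O(\cdot)$ notation. One small caveat on your aside: a second-order expansion with cubic remainder only gives $\mathcal{W}_\alpha\le 1$ locally (for $\phi$ small relative to $\alpha(1-\alpha)$), whereas the global claim follows from concavity, e.g.\ Jensen's inequality $\sum_c n_c(1+t_c)^\alpha\le\bigl(\sum_c n_c(1+t_c)\bigr)^\alpha=1$.
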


\paragraph{Approximate structure of $\mathcal{W}(\cdot)$}
Although $\mathcal{W}(\cdot)$ promotes DP fairness, it is generally neither monotonic nor submodular.
Adding a node may rebalance the proportions $x_c$, leading to small decreases in $\mathcal{W}(\cdot)$ (non-monotonicity), and marginal gains may occasionally increase (submodularity violations).
These deviations tend to occur when the solution is already approaching proportional balance; elsewhere, $\mathcal{W}(\cdot)$ behaves closely to a monotonic submodular function.

\begin{restatable}{lemma}{lemmaW}
\label{Lemma: W is approximately monotonic submodular}
$\mathcal{W}(\cdot)$ is $(\kappa,\epsilon)$-approximately monotonic submodular.
\end{restatable}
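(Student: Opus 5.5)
The plan rests on the observation that $\mathcal{W}$ is bounded. Definition~\ref{def:approximately monotonic submodular} asks only for the existence of constants $\kappa,\epsilon\ge 0$, so any uniform bound on the variation of $\mathcal{W}$ suffices; the real content lies in keeping these constants small. Since $\sigma^-_c(S_P)\ge 0$ under the LT model (removing nodes can only shrink the negative spread in each community) and $\sum_c \sigma^-_c=\sigma^-$, the vector $(x_c)_{c\in\mathcal{C}}$ lies in the probability simplex whenever $\sigma^-(S_P)>0$. For $S_P=\emptyset$, or any set with $\sigma^-=0$, we adopt the convention $\mathcal{W}=0$. The weights satisfy $\sum_c n_c=1$.

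\textbf{Step 1: bound the range of $\mathcal{W}$.} By Hölder's inequality with exponents $1/(1-\alpha)$ and $1/\alpha$,
\[
\sum_c n_c^{1-\alpha}x_c^\alpha \le \Bigl(\sum_c n_c\Bigr)^{1-\alpha}\Bigl(\sum_c x_c\Bigr)^{\alpha}=1,
\]
with equality iff $x_c=n_c$ for all $c$; this is consistent with Lemma~\ref{lemma: W to DP}. For the lower bound, $x_c^\alpha\ge x_c$ on $[0,1]$, so $\mathcal{W}\ge \sum_c n_c^{1-\alpha}x_c\ge n_{\min}^{1-\alpha}$, where $n_{\min}=\min_c n_c$. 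Hence $\mathcal{W}\in[n_{\min}^{1-\alpha},1]$ for every nonempty effective set. The crude constants are then $\kappa=1$, which also covers the jump from the empty set, and $\epsilon=2(1-n_{\min}^{1-\alpha})$, because each marginal gain lies in $[-(1-n_{\min}^{1-\alpha}),\,1-n_{\min}^{1-\alpha}]$. This already proves the lemma as stated.

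\textbf{Step 2: sharpen the constants near balance.} This is the main obstacle. The aim is to match the paper's remark that violations occur near balanced solutions and are small. When all $x_c$ are close to $n_c$, write $x_c=n_c(1+\delta_c)$ with $\sum_c n_c\delta_c=0$. A second-order Taylor expansion, as in Lemma~\ref{lemma: alpha analysis}, gives
\[
\mathcal{W}=1-\tfrac{\alpha(1-\alpha)}{2}\sum_c n_c\delta_c^2+O(\|\delta\|^3),
\]
so near balance $\mathcal{W}$ varies by at most $O(\phi^2)$ when $|\delta_c|<\phi$. In this regime both $\kappa$ and $\epsilon$ can be taken as $\alpha(1-\alpha)\phi^2+O(\phi^3)$.

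Away from balance, the plan is to control a decrease in $\mathcal{W}$ through the change in proportions. Adding $v$ changes $x_c$ to
\[
\frac{\sigma^-_c(S_P)+\Delta_c}{\sigma^-(S_P)+\Delta},
\]
so each $|x_c'-x_c|\le \Delta/(\sigma^-(S_P)+\Delta)$, where $\Delta=\sigma^-(S_P\cup\{v\})-\sigma^-(S_P)$. Because $\sigma^-$ is monotone submodular for LT blocking, $\Delta$ shrinks relative to $\sigma^-(S_P)$ as the set grows. Combined with the Lipschitz continuity of $x\mapsto x^\alpha$ on $[x_{\min},1]$, this gives explicit, data-dependent $\kappa$ and $\epsilon$.

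The step I expect to be delicate is this Lipschitz bound, because $x^\alpha$ has unbounded derivative at $0$. If it fails, I will fall back on the uniform constants from Step 1, which suffice for the lemma as stated.
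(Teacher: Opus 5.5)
Your proposal is correct as a proof of the lemma as literally stated, but it takes a very different and much coarser route than the paper. You treat the lemma as an existence claim about $\kappa,\epsilon$ and settle it by boundedness. H\"older gives $\mathcal{W}\le 1$, which incidentally proves the maximizer in Lemma~\ref{lemma: W to DP} more cleanly than the paper's Lagrangian. The inequality $x^\alpha\ge x$ then gives $\mathcal{W}\ge n_{\min}^{1-\alpha}$.

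The paper instead Taylor-expands the marginal gain, first around $x_c(S)$ and then around $n_c$, and declares the remainder $E_c(S)$ negligible. It then uses $\sum_c\delta_c^u(S)=0$ to get $\Delta_S(u)\approx\alpha(\alpha-1)\sum_c\delta_c(S)\delta_c^u(S)/n_c$. From this it reads off $\epsilon=2\alpha(1-\alpha)\delta_{\max}\delta^u_{\max}\sum_c 1/n_c$, with $\kappa$ equal to half of that.

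\textbf{What the paper's route buys.} Its constants shrink with the distance from balance and with the per-step shift in proportions. That is what gives $\psi=(1-1/e)k(\kappa+\epsilon)$ in Theorem~\ref{therorem: K} any content. Your constants are of order one, so that $\psi$ becomes of order $k$, and the guarantee is vacuous since $\mathcal{W}\le 1$.

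\textbf{What your route buys.} Rigor. The paper compares $E_c(S)$ with $\alpha\delta_c^u(S)+\alpha(\alpha-1)\delta_c(S)\delta_c^u(S)/n_c$, but the leading part $\alpha\delta_c^u(S)$ cancels in the sum over $c$. What survives is second order, just like the discarded remainder $\approx-\frac{\alpha(1-\alpha)}{2}\sum_c(\delta_c^u)^2/n_c$. That discarded term is nonpositive and dominates near balance. At an exactly balanced $S$ the paper's formula gives $\Delta_S(u)=0$, whereas $\mathcal{W}(S\cup\{u\})<1=\mathcal{W}(S)$ unless $u$ preserves balance.

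Your Step~2 expands $\mathcal{W}$ itself, as in Lemma~\ref{lemma: alpha analysis}, and so keeps that term automatically. It is therefore the sound way to obtain small near-balance constants. However, it only covers sets inside the near-balance regime, and the away-from-balance part is still a sketch.

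\textbf{One slip in Step~1.} Under your convention $\mathcal{W}=0$ when $\sigma^-=0$, gains out of such sets lie in $\{0\}\cup[n_{\min}^{1-\alpha},1]$, not in $[-(1-n_{\min}^{1-\alpha}),\,1-n_{\min}^{1-\alpha}]$. Your $\epsilon=2(1-n_{\min}^{1-\alpha})$ still works, for two reasons. Submodularity of $\sigma^-$ gives $\sigma^-(X\cup\{v\})>0$ whenever $\sigma^-(Y)=0<\sigma^-(Y\cup\{v\})$. Monotonicity rules out moving from an effective set to an ineffective one. Alternatively, take $\kappa=1$ and $\epsilon=2$ directly from $\mathcal{W}\in[0,1]$.
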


Detailed proofs for Lemmas are provided in the Appendix A.1.

\subsection{Blocking Effectiveness Objective}

To quantify the suppression of negative influence, we use the normalized blocking measure
\begin{equation}\label{eq:F}
\mathcal{F}(S_P)=\frac{\sigma^-(S_P)}{\sigma(S_N,G)}.
\end{equation}

Normalization places $\mathcal{F}(\cdot)$ and $\mathcal{W}(\cdot)$ on comparable scales.

We additionally enforce a tolerance level $\mu$ on effectiveness loss:
\begin{equation}
1 - \frac{\mathcal{F}(S_P)}{\mathcal{F}(S_P^\#)} \le \mu,
\end{equation}
where $S_P^\#$ maximizes blocking alone.
Previous work~\cite{Sun2023} shows that $\mathcal{F}(\cdot)$ is monotone and submodular.

\subsection{Final Combined Objective}
To jointly optimize fairness and effectiveness, we apply linear scalarization:
\begin{equation} \label{eq:K}
\mathcal{K}(S_P)=\beta\cdot\mathcal{W}(S_P)+(1-\beta)\cdot\mathcal{F}(S_P),
\qquad 0\le\beta\le1 .
\end{equation}

The weight $\beta$ controls the fairness–effectiveness trade-offs and yields a Pareto spectrum of solutions.
Because both terms are approximate submodular, $\mathcal{K}(\cdot)$ inherits the same structural properties.

\begin{figure*}[!ht]
    \centering
    \includegraphics[width=1\linewidth]{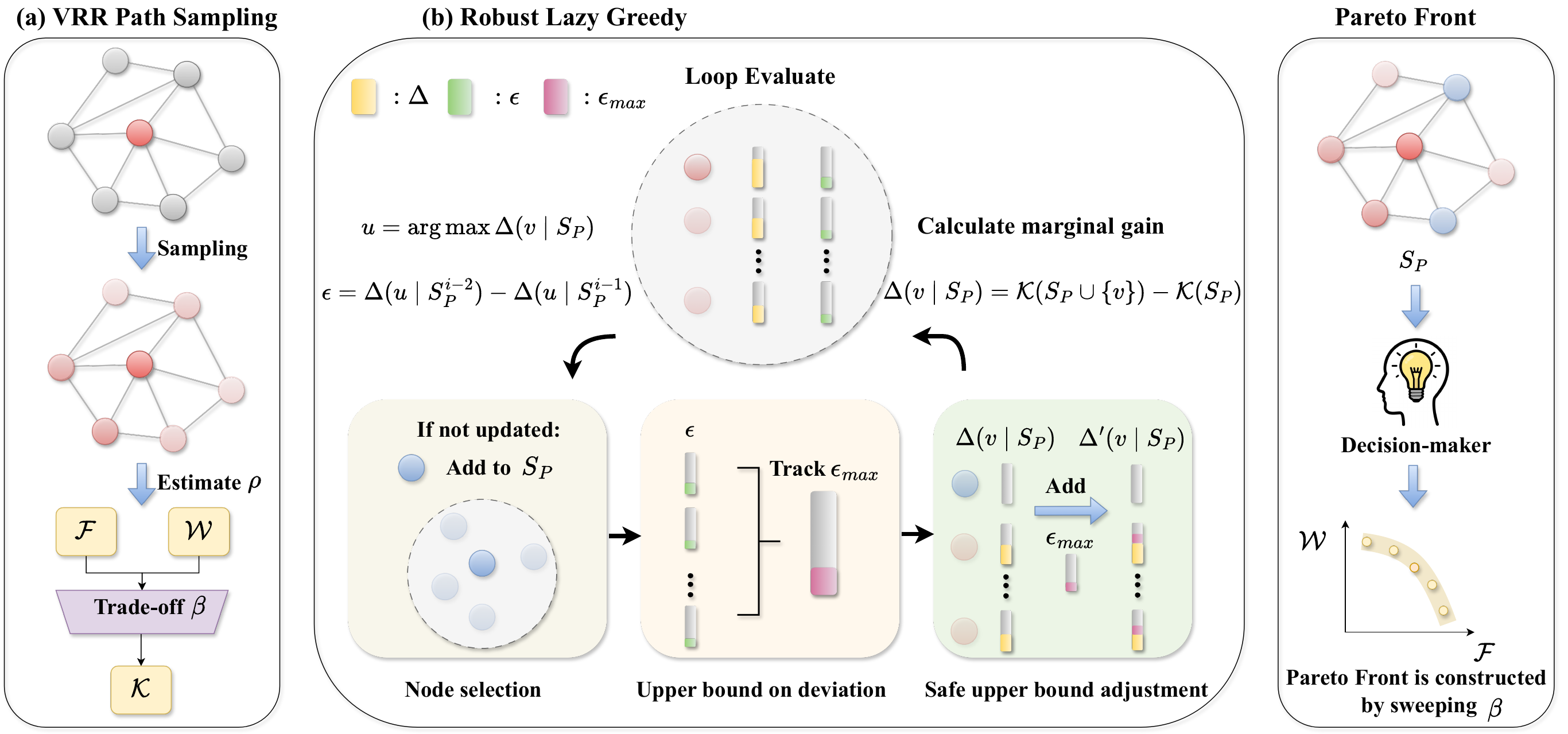}
    \caption{Overview of CELF-R and Pareto Front Construction. CELF-R consists of two main components: (a) Naive VRR path sampling, which estimates node-level blocking contributions and supports efficient evaluation of the trade-off objective $\mathcal{K}$; and (b) a robust lazy greedy selection strategy that explicitly models bounded deviations under approximate submodularity. By sweeping the trade-off parameter $\beta$, CELF-R produces a set of solutions that collectively form an empirical Pareto front balancing fairness and blocking effectiveness.}
    \label{fig:CELF-R}
\end{figure*}

\begin{restatable}{lemma}{lemmaopt}
\label{lemma:opt}
    For our proposed objective function $\mathcal{K}(\cdot)$ and any $\chi \in \{0, 1\}^n$, there exists an element $v \notin \chi$ that satisfies
    \begin{equation*}
        \mathcal{K}(\chi \cup \{v^*\}) - \mathcal{K}(\chi) \geq \frac{1}{k}(OPT - \mathcal{K}(\chi)) - \kappa - \epsilon,
    \end{equation*}
    where $k$ is the budget for the positive seed set.
\end{restatable}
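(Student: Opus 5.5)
The plan is to adapt the classical averaging argument for greedy maximization of monotone submodular functions, letting the two slack parameters absorb the failures of exact structure. First I will check that $\mathcal{K}(\cdot)$ is itself $(\kappa,\epsilon)$-approximately monotonic submodular. By Lemma~\ref{Lemma: W is approximately monotonic submodular}, $\mathcal{W}(\cdot)$ is $(\kappa_{\mathcal{W}},\epsilon_{\mathcal{W}})$-approximately monotonic submodular, and $\mathcal{F}(\cdot)$ is exactly monotone submodular (it has parameters $(0,0)$). Both defining inequalities are linear in $f$ and are preserved under nonnegative combinations. Hence $\mathcal{K}=\beta\mathcal{W}+(1-\beta)\mathcal{F}$ is $(\beta\kappa_{\mathcal{W}},\beta\epsilon_{\mathcal{W}})$-approximately monotonic submodular, and we write $\kappa,\epsilon$ for these parameters.

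Next, fix the current set $\chi$ with $|\chi|\le k$. Let $O$ be an optimal set with $|O|\le k$ and $\mathcal{K}(O)=OPT$. Write $O\setminus\chi=\{o_1,\dots,o_m\}$ with $m\le k$, and set $O_i=\{o_1,\dots,o_i\}$. I will bound $\mathcal{K}(\chi\cup O)$ from two sides.

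For the lower bound, build $\chi\cup O$ from $O$ by adding the elements of $\chi\setminus O$ one at a time. Approximate monotonicity loses at most $\kappa$ per element, so
\[
\mathcal{K}(\chi\cup O)\ge OPT-|\chi\setminus O|\,\kappa\ge OPT-k\kappa .
\]

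For the upper bound, telescope along $o_1,\dots,o_m$:
\[
\mathcal{K}(\chi\cup O)-\mathcal{K}(\chi)=\sum_{i=1}^{m}\bigl[\mathcal{K}(\chi\cup O_i)-\mathcal{K}(\chi\cup O_{i-1})\bigr].
\]
Apply approximate submodularity to each term with $X=\chi$, $Y=\chi\cup O_{i-1}$ and $v=o_i\notin Y$. This gives $\mathcal{K}(\chi\cup O_i)-\mathcal{K}(\chi\cup O_{i-1})\le \mathcal{K}(\chi\cup\{o_i\})-\mathcal{K}(\chi)+\epsilon$. Summing over $i$,
\[
\mathcal{K}(\chi\cup O)-\mathcal{K}(\chi)\le \sum_{i=1}^m\bigl[\mathcal{K}(\chi\cup\{o_i\})-\mathcal{K}(\chi)\bigr]+k\epsilon .
\]

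Combining the two bounds yields $\sum_{i=1}^m \Delta_i \ge OPT-\mathcal{K}(\chi)-k\kappa-k\epsilon$, where $\Delta_i$ is the marginal gain of $o_i$. The largest of $m\le k$ terms is at least $1/k$ of their sum, provided that sum is nonnegative. If the sum is negative, the claimed right-hand side is already negative, and any $v$ whose gain meets it suffices, which I will handle as a trivial case. Taking $v^*=\arg\max_i \Delta_i$ then gives the stated inequality with slack $\kappa+\epsilon$.

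The main obstacle is bookkeeping rather than any deep step. Each of the $\kappa$ and $\epsilon$ losses must be incurred at most $k$ times, so that dividing by $k$ leaves exactly $\kappa+\epsilon$. This needs $|\chi|\le k$, which holds throughout greedy with budget $k$, and it needs the elements of $O\cap\chi$ to be excluded from the telescoping. A further point to check is the degenerate case $O\subseteq\chi$ ($m=0$). There $OPT-\mathcal{K}(\chi)\le k\kappa$ by the monotonicity bound, so the right-hand side is at most $-\epsilon\le 0$. Approximate monotonicity then guarantees any $v\notin\chi$ has gain at least $-\kappa$, which is at least the right-hand side.
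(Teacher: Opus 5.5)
\paragraph{Gap in the trivial cases.} Your setup and the nonnegative-sum case are correct. The two cases you set aside as trivial, however, are exactly where the argument fails.

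You charge the monotonicity loss as $|\chi\setminus O|\,\kappa\le k\kappa$ and then divide by $k$, so the chain only closes when $\sum_i\Delta_i\ge 0$. If the sum is negative and $m<k$, the averaging bound $\max_i\Delta_i\ge\frac1m\sum_i\Delta_i$ is weaker than $\frac1k\sum_i\Delta_i$.

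The only remaining tool is approximate monotonicity, which gives gains $\ge-\kappa$. But the right-hand side $\frac1k(OPT-\mathcal{K}(\chi))-\kappa-\epsilon$ exceeds $-\kappa$ whenever $OPT-\mathcal{K}(\chi)>k\epsilon$. A negative sum only forces $OPT-\mathcal{K}(\chi)<k(\kappa+\epsilon)$, so the two bounds do not meet.

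Your $m=0$ step has the same flaw. ``The right-hand side is at most $-\epsilon$'' does not give ``at most $-\kappa$'' unless $\kappa\le\epsilon$.

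This cannot be patched within your setup. Take $V=\{a,b,o\}$ with $f(\emptyset)=f(\{o\})=0$, $f(\{a\})=f(\{b\})=-\tfrac12$, $f(\{a,b\})=f(\{a,o\})=f(\{b,o\})=-1$, $f(V)=-2$. This $f$ is exactly submodular and $(1,0)$-approximately monotonic submodular. With $k=3$ we get $OPT=0$, attained only by $\emptyset$ and $\{o\}$. Take $\chi=\{a,b\}$, or $\chi=\{a,o\}$, which greedy can reach. The right-hand side is $-\tfrac23$, but the only element outside $\chi$ has gain $-1$. Both of your ``trivial'' cases occur here: $O=\{o\}$ gives a negative sum, and $O=\emptyset$ gives $m=0$.

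\paragraph{Missing idea.} The fix is the device in the paper's proof: compare $\chi$ with an optimal set of the \emph{same} cardinality, $|\chi^*|=|\chi|$, so that $|\chi\setminus\chi^*|=|\chi^*\setminus\chi|=l$.

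Approximate monotonicity then costs $l\kappa$ and the telescoping costs $l\epsilon$. Averaging over the same $l$ terms gives $\max_i\Delta_i\ge\frac1l(OPT-\mathcal{K}(\chi))-\kappa-\epsilon$, with no sign condition on the sum. Passing from $\frac1l$ to $\frac1k$ needs only $OPT\ge\mathcal{K}(\chi)$ and $l\le k$. If $l=0$, then $\chi=\chi^*$, and every gain $\ge-\kappa$ already meets the bound.

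In your notation, the repair is:
\begin{itemize}
\item ensure $|O|\ge|\chi|$, so that $|\chi\setminus O|\le m$;
\item charge $|\chi\setminus O|\,\kappa\le m\kappa$ instead of $k\kappa$, and average over $m$;
\item split on the sign of $OPT-\mathcal{K}(\chi)$ rather than of $\sum_i\Delta_i$. If it is negative, the right-hand side is below $-\kappa-\epsilon$ and any $v$ works.
\end{itemize}

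Note that this changes what $OPT$ means. It becomes the optimum over sets of size $|\chi|$ (as in the paper) or of size exactly $k$, not over $|S|\le k$. The example shows that under the $|S|\le k$ reading, the inequality does not follow from the $(\kappa,\epsilon)$ properties alone.
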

% \begin{restatable}{lemma}{lemmaopt}
% \label{lemma:opt}
% For $\mathcal{K}(\cdot)$ and any partial solution $\chi$, there exists $v\notin\chi$ such that
% \begin{equation}
% \mathcal{K}(\chi\cup{v})-\mathcal{K}(\chi)
% \ge \frac{1}{k}\bigl(OPT-\mathcal{K}(\chi)\bigr)-\kappa-\epsilon.
% \end{equation}
% \end{restatable}

\begin{restatable}{theorem}{theoremK}
\label{therorem: K}
    For a $(\kappa, \epsilon)$-approximately monotonic submodular function $\mathcal{K}(\cdot)$ with a size constraint $k$, let $\chi^*$ be the optimal set that maximizes $\mathcal{K}(\cdot)$. 
    % Then $\mathcal{K}(\chi) \geq (1 - 1/e) \cdot (\mathcal{K}(\chi^*) - k\kappa - k\epsilon)$ with $\chi$ selected with greedy approach. 
    Then $\mathcal{K}(\chi) \geq (1 - 1/e) \cdot \mathcal{K}(\chi^*)-\psi$ with $\chi$ selected with greedy approach. 
\end{restatable}
% \begin{restatable}{theorem}{theoremK}
% \label{therorem: K}
% For a $(\kappa,\epsilon)$-approximately monotonic submodular objective under a cardinality constraint $k$, the greedy algorithm returns $\chi$ satisfying
% \begin{equation}
% \mathcal{K}(\chi)\ge (1-1/e)\bigl(\mathcal{K}(\chi^*)-k\kappa-k\epsilon\bigr).
% \end{equation}
% \end{restatable}

Please refer to the Appendix A.1 for their detailed proofs.

%% file: 5Method.tex
\section{Method}
In this section, we present our proposed CELF-R algorithm.
As illustrated in Fig.~\ref{fig:CELF-R}, CELF-R consists of two components:
(a) Naive Valid Reverse-Reachable (VRR) path sampling,
and (b) a robust lazy greedy selection strategy.
Moreover, we provide the construction of Pareto front for fairness-effectiveness trade-offs via CELF-R.

\subsection{Naive VRR Path Sampling} 
We adopt the Naive VRR path sampling approach proposed by Sun {\em et al.}~\cite{Sun2023}, which estimates the number of VRR paths $\theta$ required to guarantee approximation quality with high probability.
Accordingly, we use the same value of $\theta$ throughout this work.

Let $\theta_v$ denote the number of paths rooted at $v$, $\M[u][v]$ denote the number of occurrences of node $u$ in paths originating from $v$, and $\D[u]$ be the set of VRR paths containing $u$.
The frequency of each VRR path $P_v$ is recorded in $\L[P_v]$.
We maintain a tuple $\J=(\M,\D,\L)$.

The blocking capability of a node $u$ is then represented as
\begin{equation}
\rho_u = \{(v, \M[u][v]/\theta_v) \mid v \in V \setminus S_N\}.
\end{equation}

When a node $u$ is selected, all VRR paths in $\D[u]$ are invalidated.
Specifically, for each path $P_v \in \D[u]$ and each node $w \in P_v$, we update $\M[w][v] = \M[w][v] - \L[P_v]$ and set $\L[P_v]=0$.
Therefore, $\mathcal{K}(\cdot)$ can be easily calculated via $\rho$.

\subsection{Lazy Greedy Optimization under Approximate Submodularity}
In the previous section, we established that the combined objective $\mathcal{K}(\cdot)$ is
$(\kappa,\epsilon)$-approximately monotonic submodular.
This implies that classical greedy optimization remains effective, but strict diminishing returns
and monotonicity may be locally violated.
As a result, directly applying the Cost-Effective Lazy Forward (CELF) algorithm~\cite{celf},
which relies on exact submodularity, is no longer sound.

To address this issue, we propose CELF-R, an enhanced variant of CELF tailored for approximately submodular objectives.
Unlike classical CELF, which assumes a fixed upper bound on marginal gains, CELF-R explicitly models bounded deviations.
The deviation parameter $\epsilon$ is treated as an \emph{empirical upper bound} on marginal gain
violations, estimated online during the greedy process rather than assumed as a known global constant.

Algorithm~\ref{alg:CELF-R} details the procedure.
At each iteration, CELF-R maintains a lazy evaluation of marginal gains $\Delta(v \mid S_P)$ together with an \texttt{updated} flag indicating whether the gain has been recomputed under the current solution.
When a candidate node $u$ is selected, its marginal gain is recomputed exactly, and the deviation $\epsilon$ is estimated based on the marginal decay of $u$ across consecutive stages.
This quantity captures the maximum observed violation of diminishing returns and is accumulated into $\epsilon_{\max}$.

Under $(\kappa,\epsilon)$-approximate submodularity, the true marginal gain of any element can increase by at most $\epsilon$.
Therefore, compensating stale gains by $\epsilon_{\max}$ preserves a valid upper bound on their true marginal gains.
This allows CELF-R to safely skip redundant recomputations while maintaining the correctness of greedy selection.

In practice, violations of monotonicity captured by $\kappa$ are rarely observed and tend to occur only when the solution is near saturation, {\em i.e.}, when additional selections provide limited improvement.
Empirically, the magnitude of such violations is dominated by $\epsilon$, and their effect is effectively absorbed by the same upper-bound adjustment mechanism.
We therefore focus on $\epsilon$ in implementation without loss of practical effectiveness.

\begin{algorithm}[!b]
\caption{CELF-R}
\label{alg:CELF-R}
\begin{algorithmic}[1]
\Require 
Graph $G=(V,E)$, community $\mathcal{C}$, negative seed set $S_N$, budget $k$, trade-off parameter $\beta$.
\Ensure 
Positive seed set $S_P$.

\State $\J(\M, \D, \L)$ = Naive-VRRP($G, \mathcal{C}, S_N$)
\State Initialize $S_P = \emptyset$, $\epsilon_{\max} = 0$
\State Initialize $\Delta(v \mid S_P) = \mathcal{K}(S_P \cup \{v\}) - \mathcal{K}(S_P)$ for all $v \in V \setminus S_N$

\For{$i = 1$ to $k$}
    \State $updated[v] = false$ for all $v \in V \setminus (S_N \cup S_P)$
    \While{true}
        \State $u = \arg\max_{v \in V \setminus (S_N \cup S_P)} \Delta(v \mid S_P)$
        \If{$updated[u] = true$}
            \State \textbf{break}
        \EndIf
        \State Recompute $\Delta(u \mid S_P) = \mathcal{K}(S_P \cup \{u\}) - \mathcal{K}(S_P)$
        \If{$i > 2$}
            \State $\epsilon = \Delta(u \mid S_P^{i-2}) - \Delta(u \mid S_P^{i-1})$ //superscript denotes the i-th state.
            \State $\epsilon_{\max} = \max(\epsilon_{\max}, \epsilon)$
        \EndIf
        \State $updated[u] = true$
    \EndWhile

    \For{each $v \in V \setminus (S_N \cup S_P)$}
        \If{$updated[v] = false$}
            \State $\Delta(v \mid S_P) = \Delta(v \mid S_P) + \epsilon_{\max}$
        \EndIf
    \EndFor

    \State $S_P = S_P \cup \{u\}$, update $\J$
\EndFor
\State \Return $S_P$
\end{algorithmic}
\end{algorithm}

\paragraph{Implementation and Efficiency.}
For clarity, Algorithm~\ref{alg:CELF-R} presents the core sequential logic of CELF-R.
In practice, several components—such as VRR path sampling, marginal gain recomputation, and gain compensation—can be efficiently parallelized or executed in batches.
Our experimental implementation adopts multi-threading and batched updates, which significantly
improves runtime efficiency without affecting the algorithmic behavior or solution quality.

\subsection{Pareto Front Construction via CELF-R}
For a fixed $\beta$, CELF-R returns a single solution optimizing the corresponding scalarized objective.
By varying $\beta \in [0,1]$, we obtain a collection of trade-off solutions that together form an empirical Pareto front using Fast Non-dominated Sorting.

In practice, domain-specific constraints—such as the allowable loss in blocking effectiveness $\mu$—can be used to restrict the feasible range of $\beta$.
Within this range, finer-grained values of $\beta$ may be explored to better capture application-specific preferences.
In most cases, the solution tends to favor fairness when $\beta \rightarrow 1$, otherwise the opposite.
We deliberately refrain from enforcing a fixed selection criterion, as the final choice along the Pareto front depends on external requirements and deployment contexts.

Importantly, the Pareto front construction incurs only marginal overhead.
The VRR paths samples and auxiliary data structures are shared across runs with different $\beta$, and CELF-R naturally supports warm-starting.
As a result, the full trade-off curve can be obtained efficiently, making the approach practical for large-scale networks.

%% file: 6Experiment.tex
\section{Experiments}
\subsection{Experimental Setups}
\paragraph{Datasets.}
We conduct experiments on four real-world social networks.
Facebook, Slashdot, and Gowalla are obtained from~\cite{nr}, where communities are detected using the Louvain algorithm~\cite{Louvain}.
For Pokec~\cite{pokec}, communities are divided by user regions.
Details are presented in Table~\ref{Tab:Datasets}.

\begin{table}[!ht]
    \centering
    \caption{Datasets}
    \label{Tab:Datasets}
    \begin{tabular}{cccccc}
    \toprule
    \textbf{Dataset} & \textbf{$|V|$} & \textbf{$|E|$} & \textbf{$|C|$} & \textbf{$\left\langle{d}\right\rangle$} & \textbf{Type}\\
    \midrule
     Facebook &$4.5$K &$161.4$K &$9$   &$141.5$ &Undirected \\
     Slashdot &$70.0$K &$358.6$K &$358$   &$20.5$ &Undirected\\
     Gowalla &$196.6$K &$950.3$K &$518$   &$19.3$ &Undirected\\
     Pokec    &$1.6$M  &$30.6$M  &$10$    &$37.5$ &Directed \\
    \bottomrule
    \end{tabular}
\end{table}

\paragraph{Setups.}
Following the standard adversarial assumption for IBM problems, we select the top-degree nodes as the negative seed set, setting its size to 50, and define the propagation probability as $p = 1/d^-(v)$.
We set the fairness parameter $\alpha=0.5$; as shown in Lemma~\ref{lemma: alpha analysis}, varying $\alpha$ has limited impact on the overall results.
All experiments are implemented in Golang and conducted on a server equipped with an AMD EPYC 7402P 24-Core processor and 64GB RAM.
We run each experiment 5 times and take the average.

\paragraph{Performance Metrics.}
We evaluate solutions using the fairness-aware objective $\mathcal{W}(\cdot)$ and the blocking effectiveness objective $\mathcal{F}(\cdot)$, defined in Eq.~(\ref{eq:W}) and Eq.~(\ref{eq:F}), respectively.
Both metrics are maximization objectives, where higher values indicate better performance.

\begin{figure*}[!t]
  \centering
  \subfigure[Facebook]
  {\includegraphics[width=0.245\linewidth]{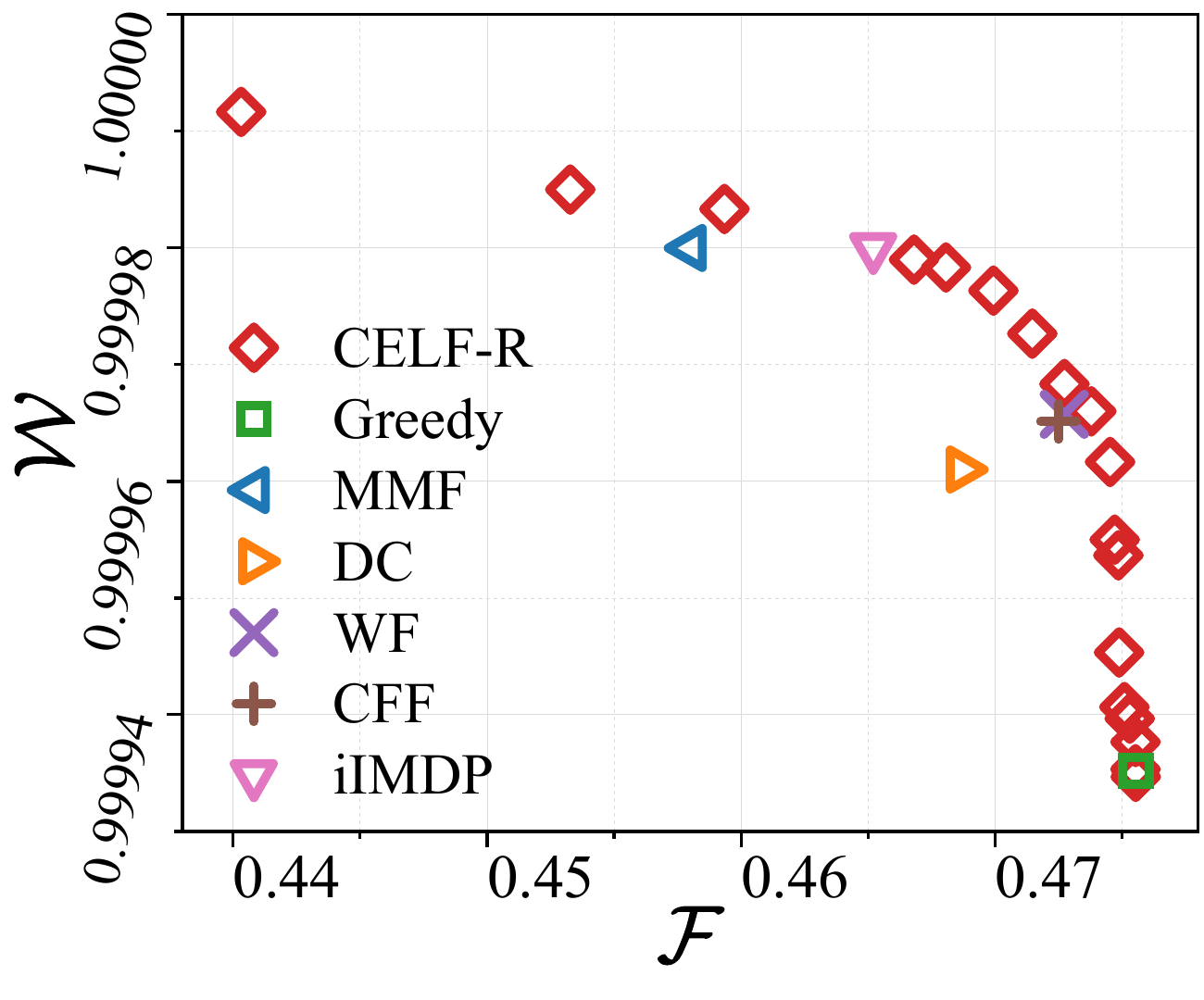}}
  \subfigure[Slashdot]
  {\includegraphics[width=0.245\linewidth]{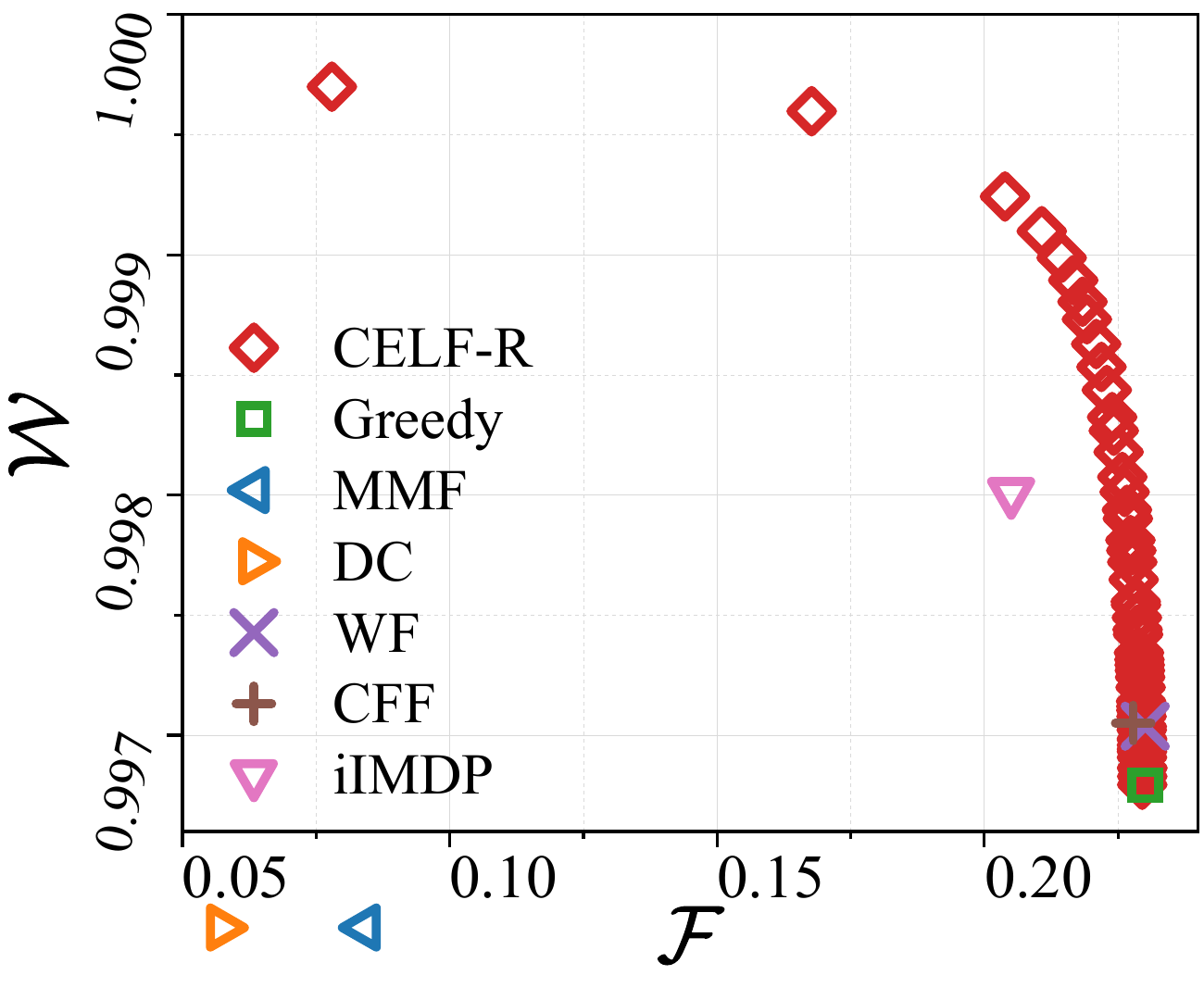}}
  \subfigure[Gowalla]
  {\includegraphics[width=0.245\linewidth]{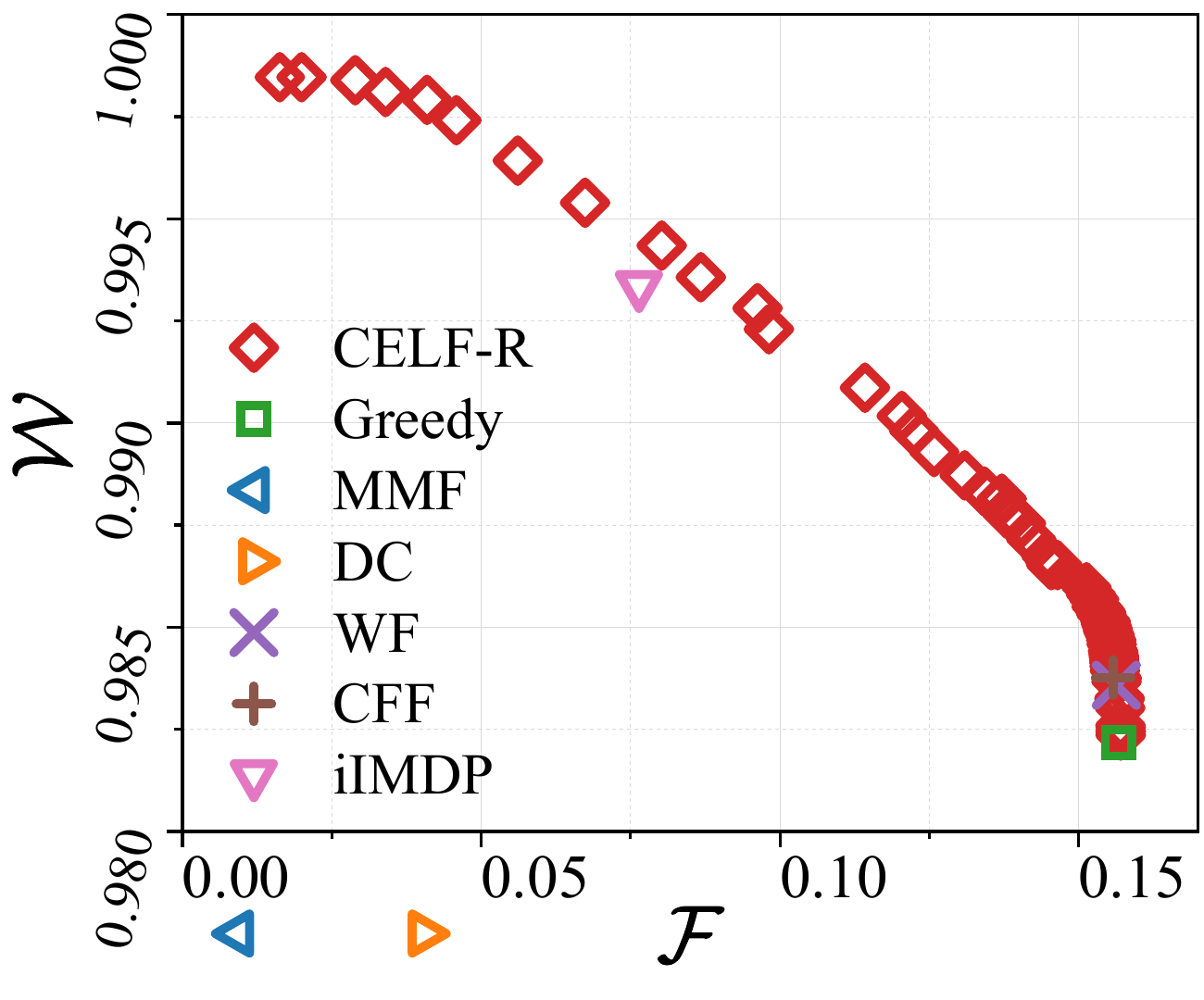}}
  \subfigure[Pokec]
  {\includegraphics[width=0.245\linewidth]{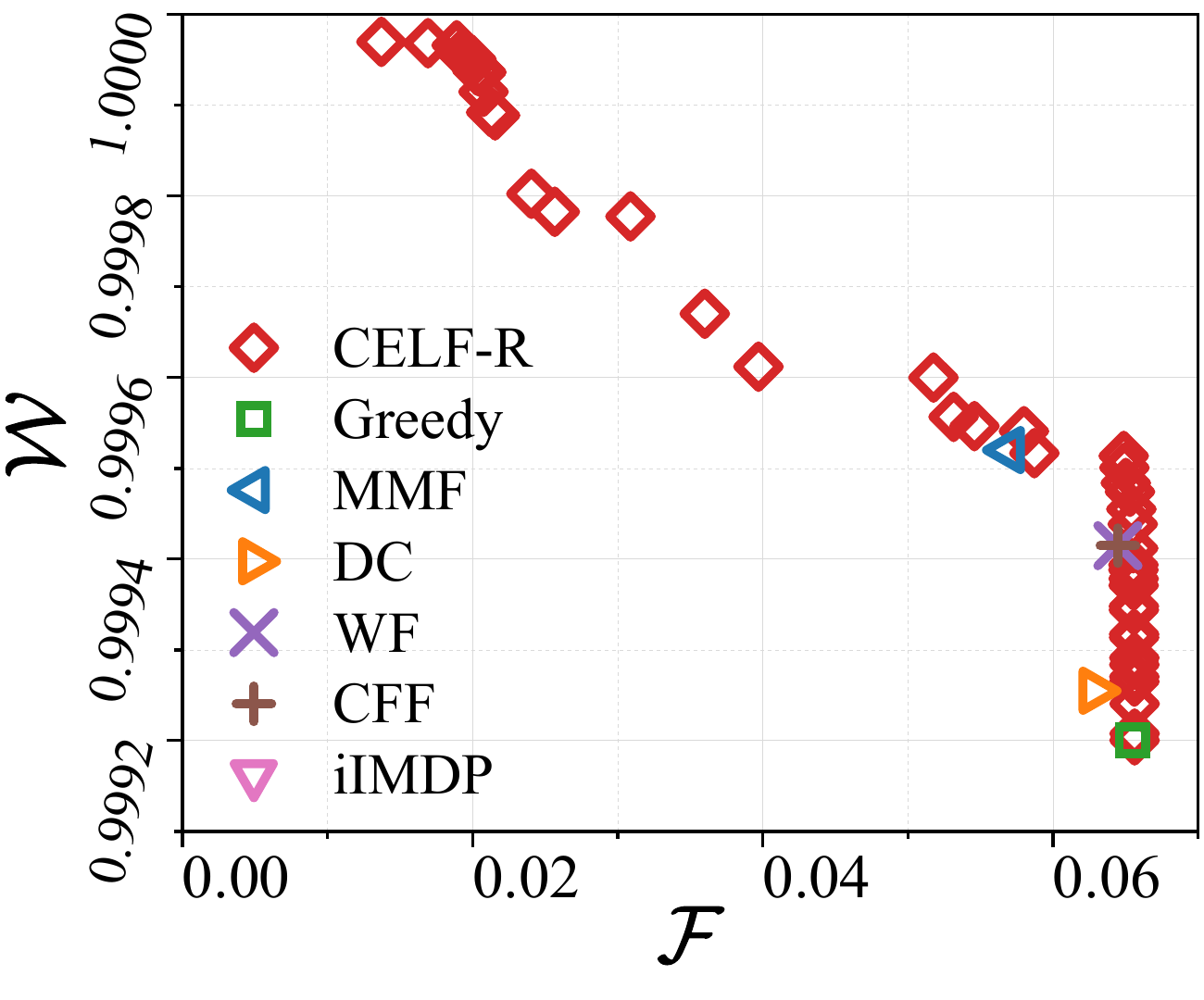}}
  % \caption{Comparison of different fairness notions.}
  \caption{Comparison of different fairness notions on the FIBM problem.
The x-axis denotes blocking effectiveness $\mathcal{F}$, and the y-axis denotes the fairness-aware objective $\mathcal{W}$, where higher values are better for both.
Our method produces a Pareto front by varying the trade-off parameter $\beta$, while each baseline yields a single solution.
Points closer to the top-right corner indicate better fairness–effectiveness trade-offs.}
  \label{fig:Comparisons}
\end{figure*}

\begin{figure*}[ht]
  \centering
  \subfigure[Facebook]
  {\includegraphics[width=0.245\linewidth]{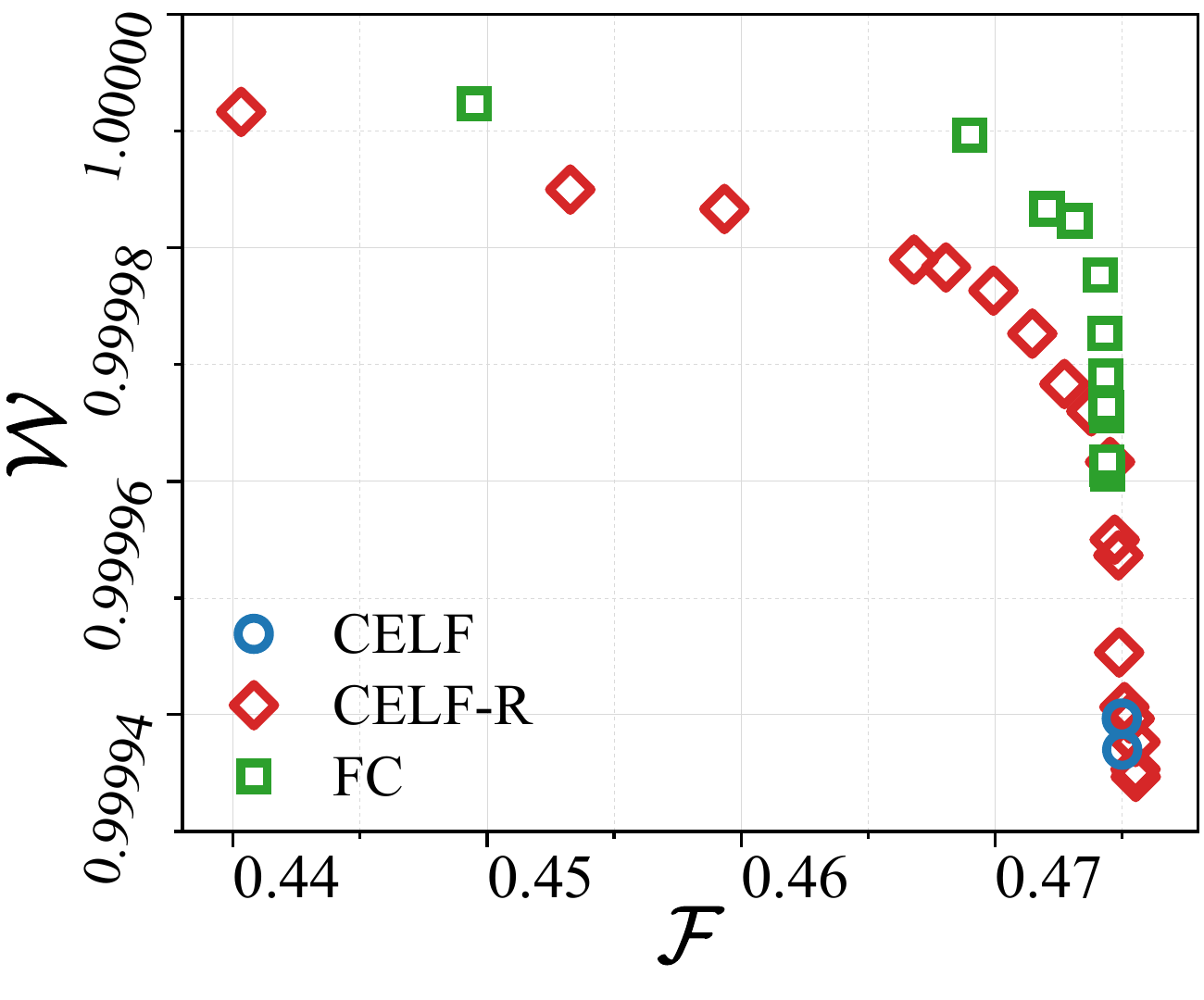}}
  \subfigure[Slashdot]
  {\includegraphics[width=0.245\linewidth]{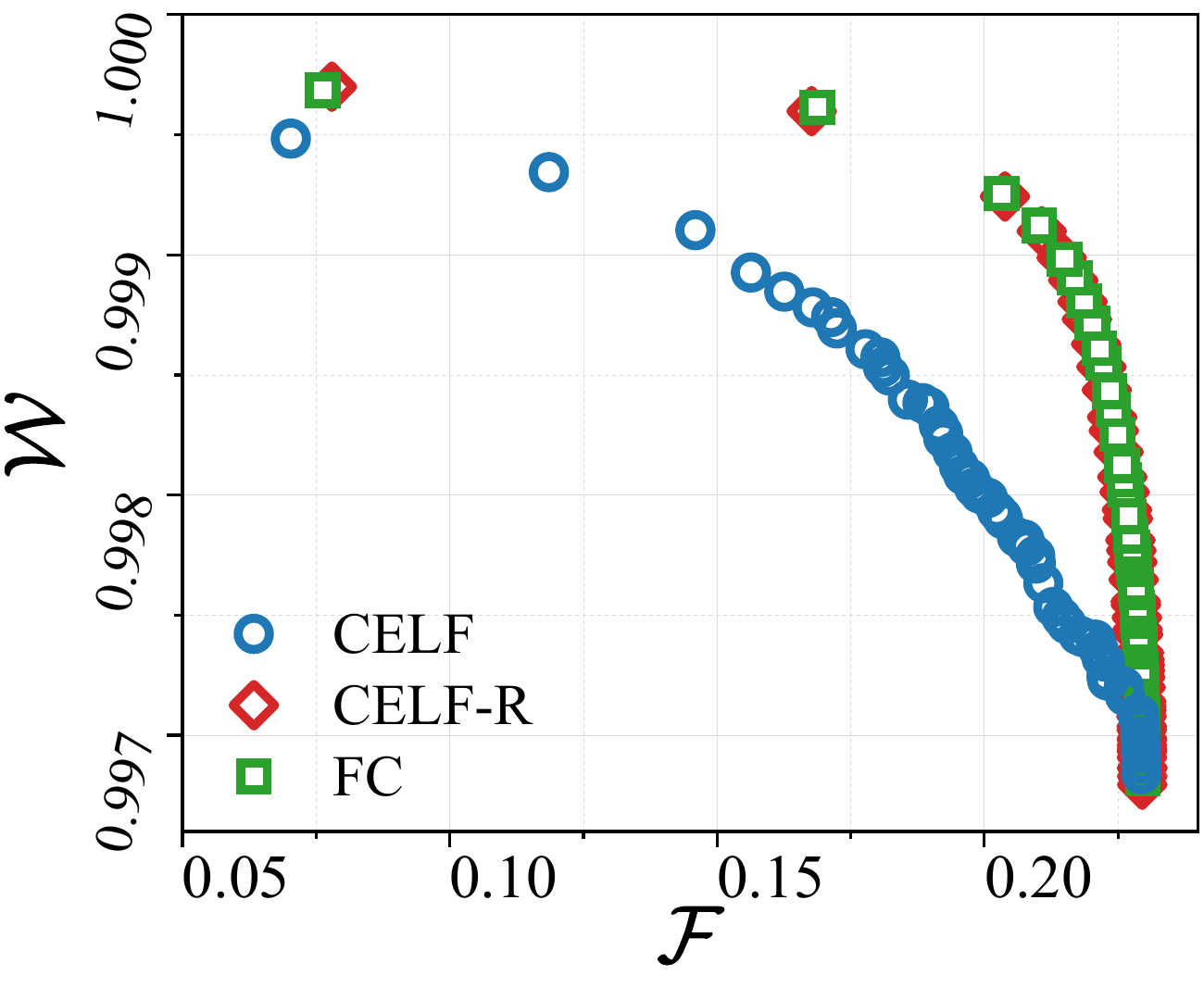}}
  \subfigure[Gowalla]
  {\includegraphics[width=0.245\linewidth]{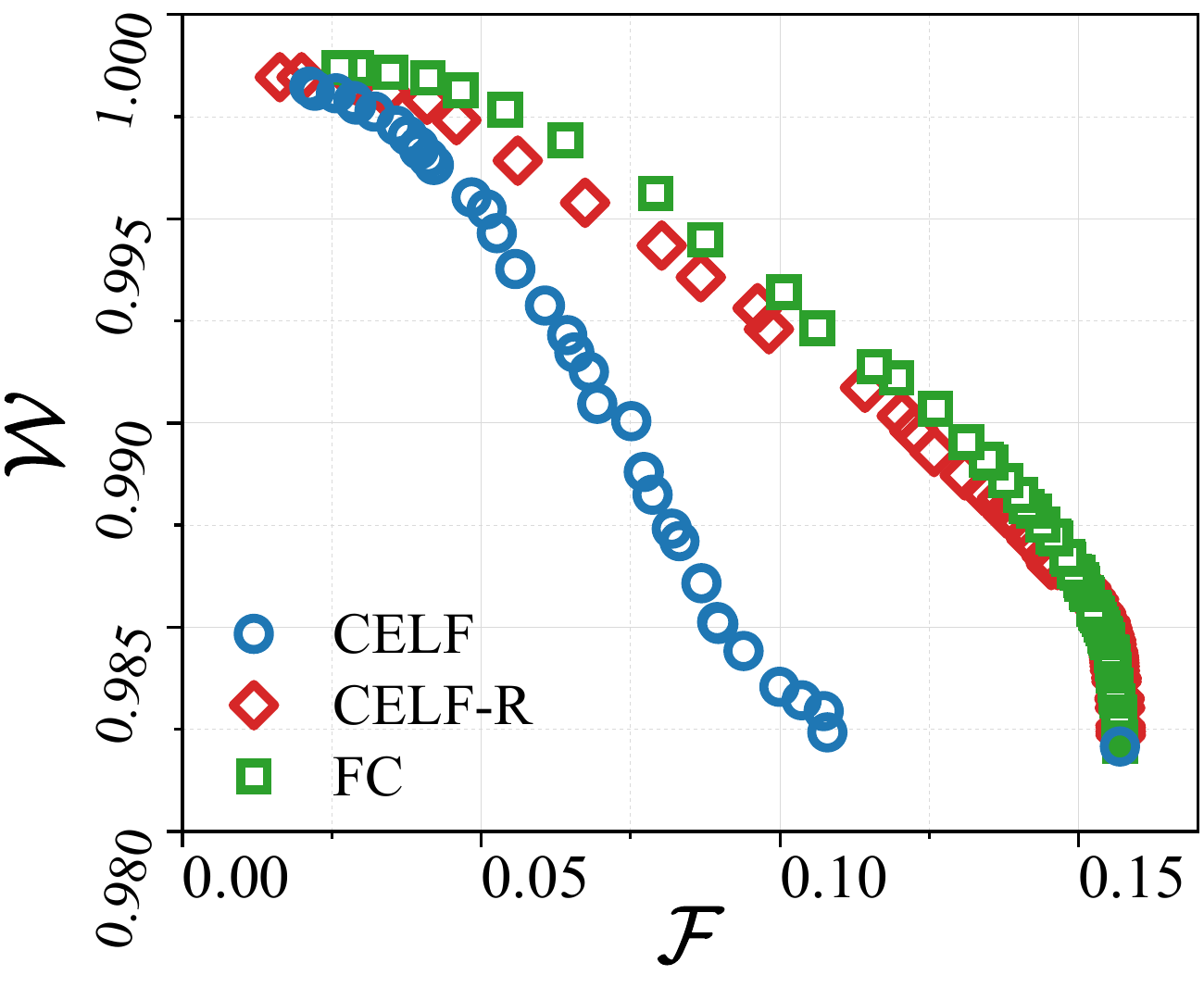}}
  \subfigure[Pokec]
  {\includegraphics[width=0.245\linewidth]{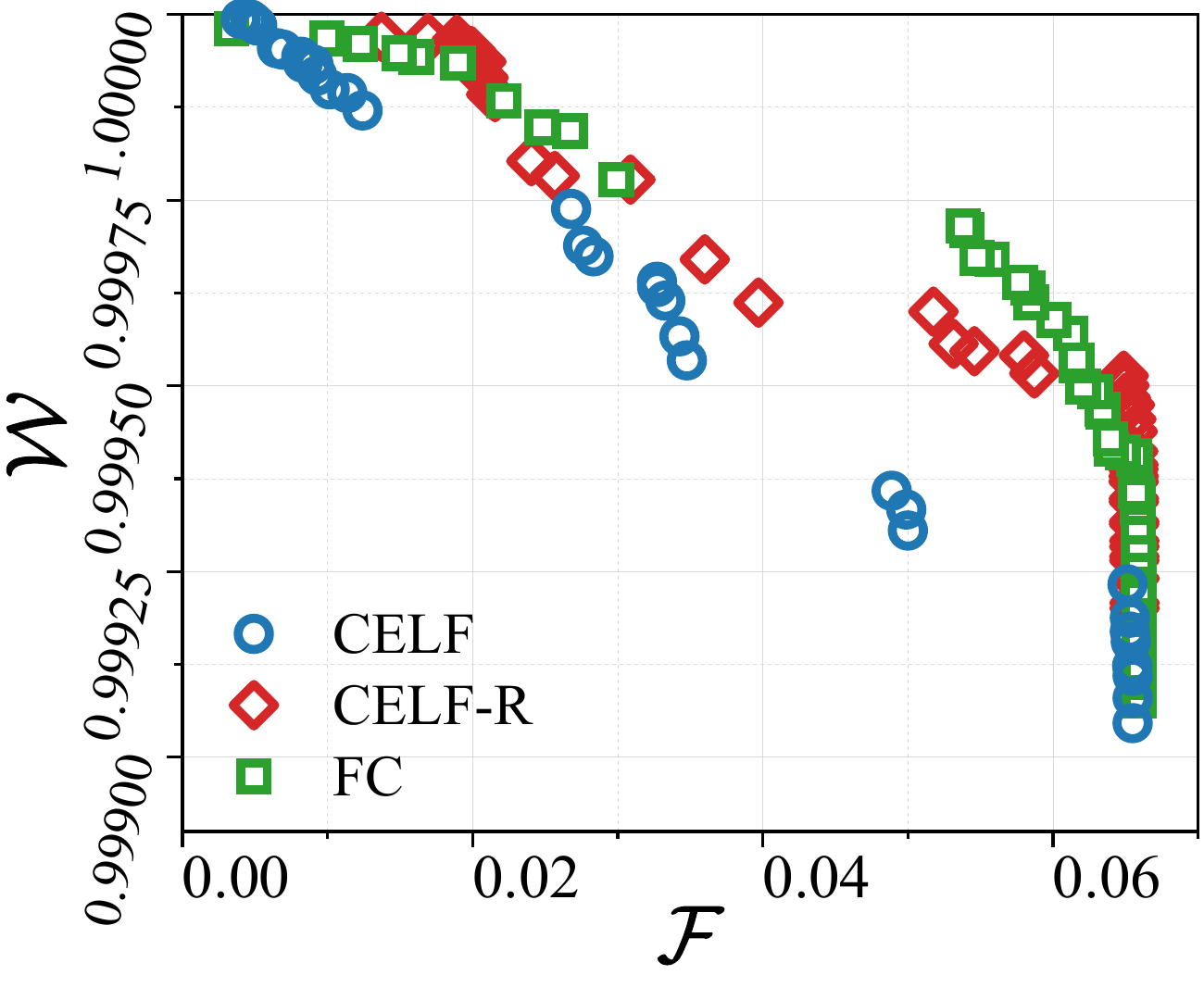}}
  % \caption{CELF, CELF-R, and FC.}  
  \caption{Ablation study on seed selection strategies.
Pareto fronts produced by CELF-R and Full Computation (FC) nearly overlap, indicating that CELF-R achieves solution quality comparable to exhaustive marginal gain evaluation.
In contrast, CELF performs significantly worse due to its assumption of strict submodularity.}
  \label{fig:ablation_fs}
\end{figure*} 

\subsection{Comparison with Baselines}
In this subsection, we construct an empirical Pareto front by varying the trade-off parameter $\beta$ from $0$ to $1$ with a step size of $0.01$.
This full-sweeping strategy is adopted solely for demonstration purposes, in order to obtain a relatively comprehensive view of the fairness–effectiveness trade-offs.
% In practical deployments, many of these solutions may be infeasible due to application-specific constraints such as the allowable loss in blocking effectiveness $\mu$.
In fact, many of these solutions may be infeasible due to constraints like the allowable loss in blocking effectiveness $\mu$.

We set the budget $k=100$ for all networks and compare the resulting positive seed set with the following baselines:
% \begin{itemize}
% \item \textbf{Greedy}: maximizes blocking effectiveness without considering fairness.
% \item \textbf{iIMDP}~\cite{Becker2023}: although proposed for IM, we adapt its LP formulation to enforce DP fairness in the IBM setting, with a time limit of 12 hours.
% \item \textbf{MMF}~\cite{Tsang2019}: maximizes the minimum utility among communities.
% \item \textbf{DC}~\cite{Tsang2019}: follows the original disparity-constrained formulation.
% \item \textbf{WF}~\cite{Rahmattalabi2021}: we set $\alpha_{\text{WF}}=0.1$ as suggested by the authors.
% \item \textbf{CFF}~\cite{Wang2025}: we adopt $\log_2(x^{0.01}+1)$ due to its reported effectiveness.
% \end{itemize}
(1) \textbf{Greedy}: maximizes blocking effectiveness without considering fairness.
(2) \textbf{iIMDP}~\cite{Becker2023}: although proposed for IM, we adapt its LP formulation to enforce DP fairness in the IBM setting, with a time limit of 12 hours.
(3) \textbf{MMF}~\cite{Tsang2019}: maximizes the minimum utility among communities.
(4) \textbf{DC}~\cite{Tsang2019}: follows the original disparity-constrained formulation.
(5) \textbf{WF}~\cite{Rahmattalabi2021}: we set $\alpha_{\text{WF}}=0.1$ as suggested by the authors.
(6) \textbf{CFF}~\cite{Wang2025}: we adopt $\log_2(x^{0.01}+1)$ due to its reported effectiveness.

The results are shown in Fig.~\ref{fig:Comparisons}.
The x-axis represents blocking effectiveness $\mathcal{F}$, and the y-axis represents fairness $\mathcal{W}$.
Points closer to the top-right corner indicate higher-quality solutions.
Our method produces a Pareto front, while all baseline methods yield a single solution that either lies on the frontier or is strictly dominated.

% As expected, Greedy occupies the extreme point with the highest blocking effectiveness.
% Since iIMDP outputs a probabilistic solution in its original formulation, we convert it into a discrete seed set by selecting the top-$k$ nodes with the highest probabilities.
% While this rounding may introduce instability on certain networks, iIMDP generally favors fairness over effectiveness.
% Moreover, despite careful tuning, iIMDP fails to terminate within the predefined 12-hour time limit on large-scale networks such as Pokec, highlighting the inherent scalability limitations of LP-based approaches.
% MMF achieves a relatively high level of fairness on datasets with a small number of communities, but its blocking effectiveness decreases significantly as the number of communities increases.
% DC performs consistently poorly across datasets, likely due to its heavy reliance on specific network structures.
% WF and CFF achieve better trade-offs than other baselines, but still fall behind the Pareto front produced by our method.
As expected, \textbf{Greedy} occupies the extreme point with the highest blocking effectiveness, serving as an upper bound when fairness is ignored.
Since \textbf{iIMDP} outputs a probabilistic solution in its original formulation, we convert it into a discrete seed set by selecting the top-$k$ nodes with the highest probabilities.
While this rounding strategy may introduce instability on certain networks, \textbf{iIMDP} generally prioritizes fairness over blocking effectiveness.
Despite careful tuning, iIMDP fails to terminate within the predefined 12-hour time limit on large-scale networks such as Pokec, reflecting the inherent scalability limitations of LP-based formulations.
\textbf{MMF} achieves relatively high fairness on datasets with a small number of communities, but its blocking effectiveness degrades substantially as the number of communities increases.
\textbf{DC} performs consistently poorly across all datasets, likely due to its strong dependence on specific internal network structures.
Both \textbf{WF} and \textbf{CFF} achieve better trade-offs than the aforementioned baselines; however, their solutions are consistently dominated by our method.

We note that there are gaps between adjacent Pareto points due to the fixed step size of $0.01$.
In practice, decision-makers can focus on regions of interest along the Pareto front and refine the search with a smaller step size to obtain solutions that better match application-specific requirements.

% Overall, these results demonstrate the superiority of CELF-R in achieving high-quality fairness–effectiveness trade-offs for the FIBM problem.
Overall, these results demonstrate the superiority of CELF-R in achieving high-quality fairness–effectiveness trade-offs.

\begin{figure*}[!t]
  \centering
  \subfigure[Facebook]
  {\includegraphics[width=0.245\linewidth]{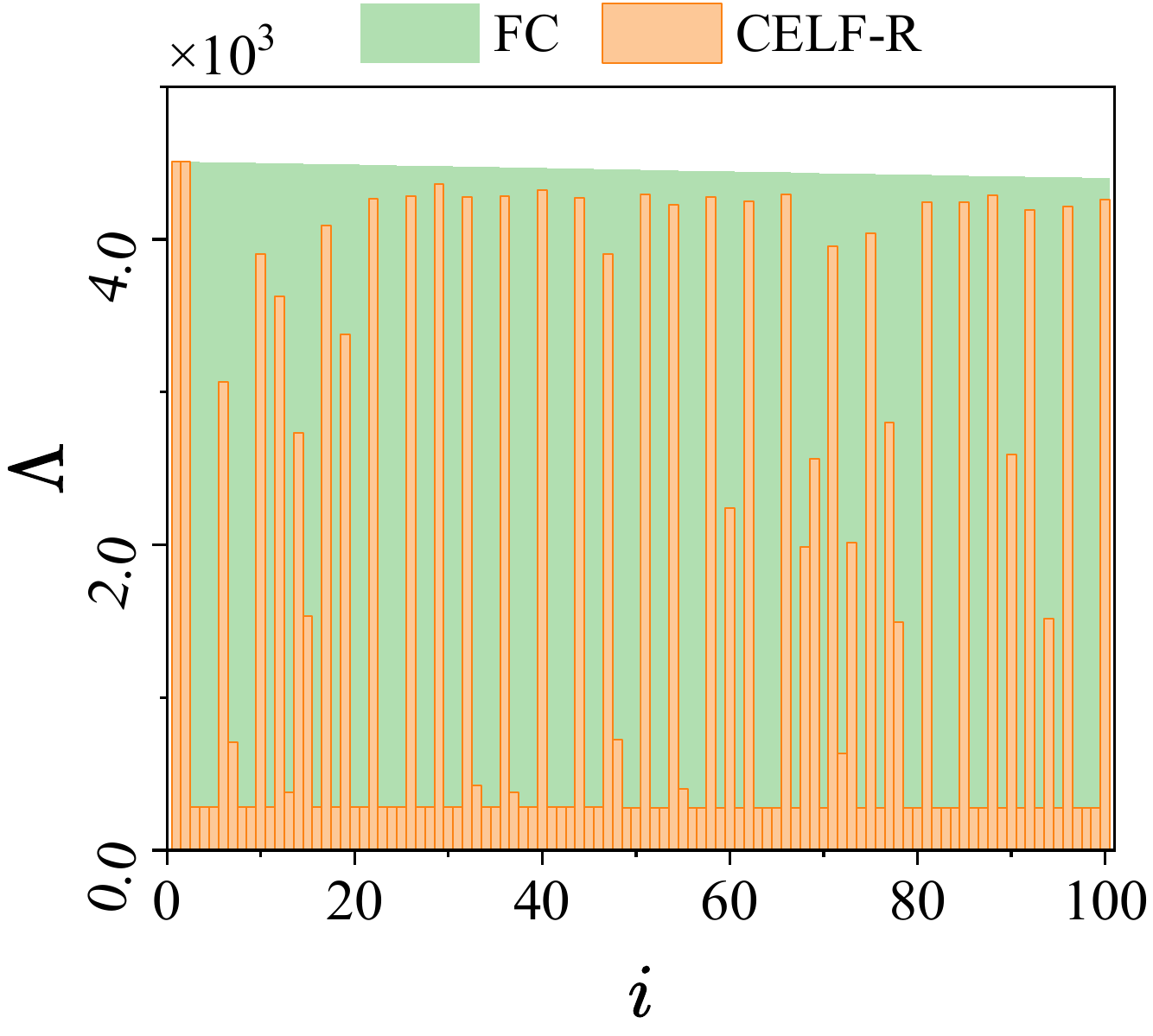}}
  \subfigure[Slashdot]
  {\includegraphics[width=0.245\linewidth]{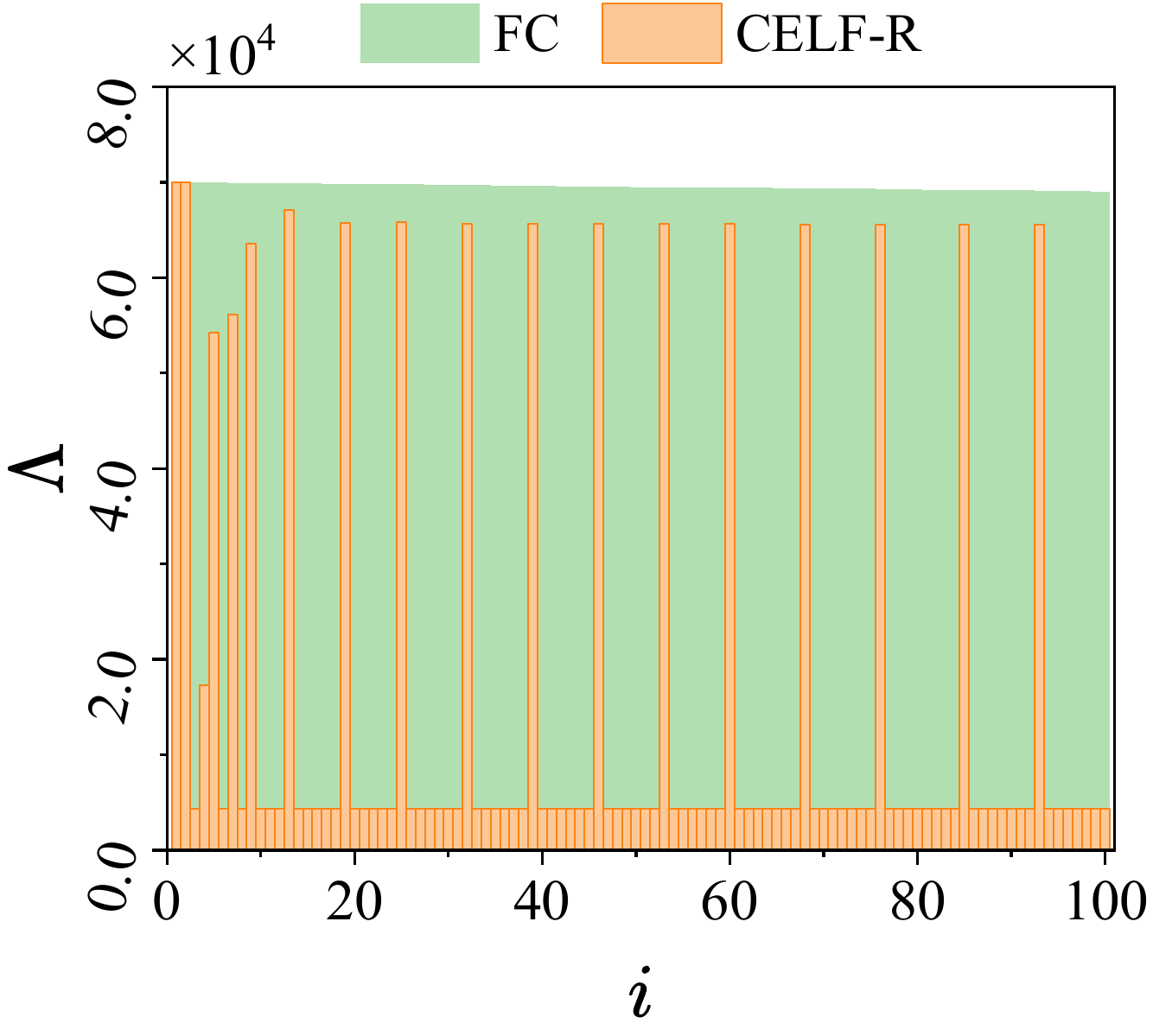}}
  \subfigure[Gowalla]
  {\includegraphics[width=0.245\linewidth]{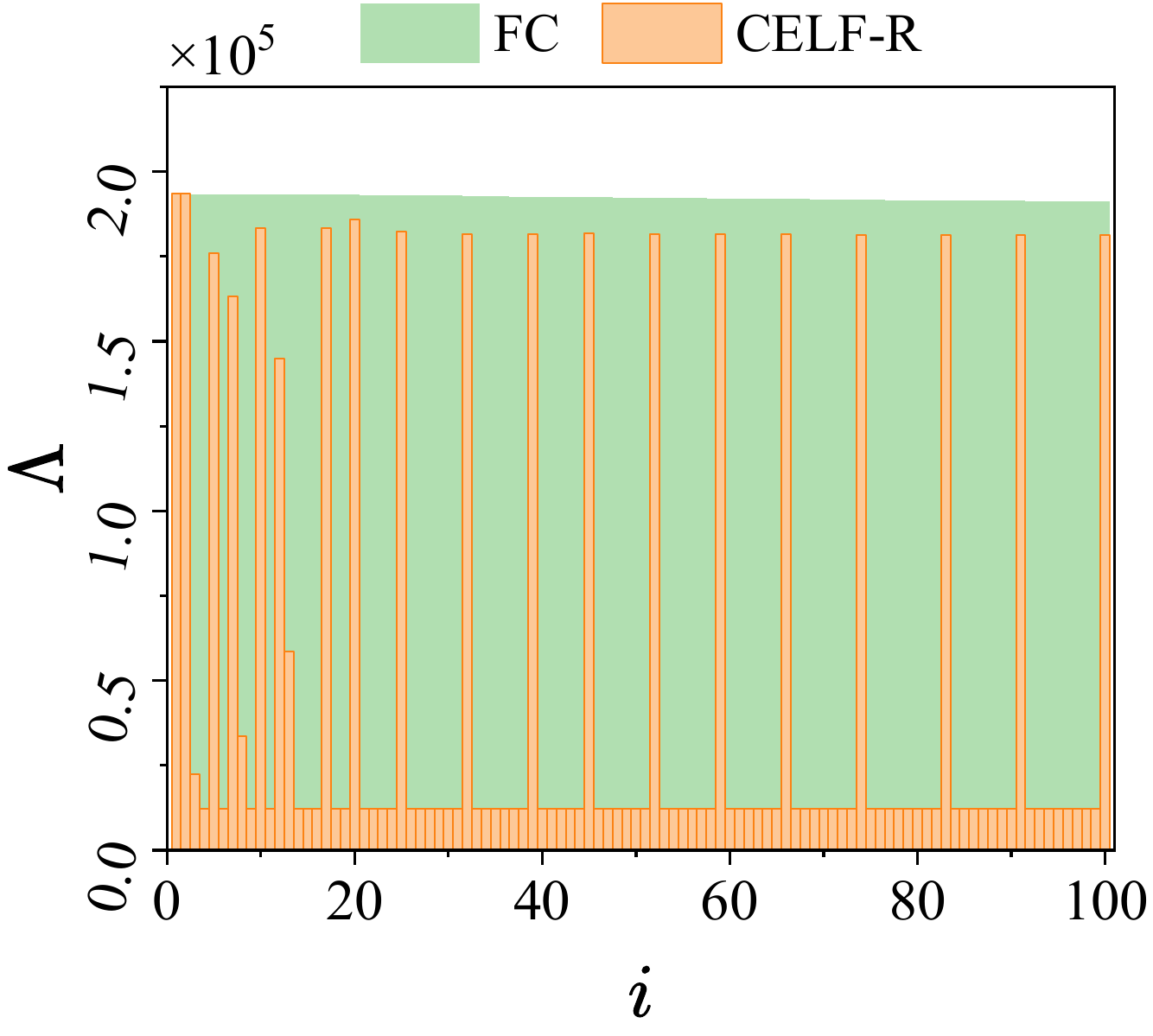}}
  \subfigure[Pokec]
  {\includegraphics[width=0.245\linewidth]{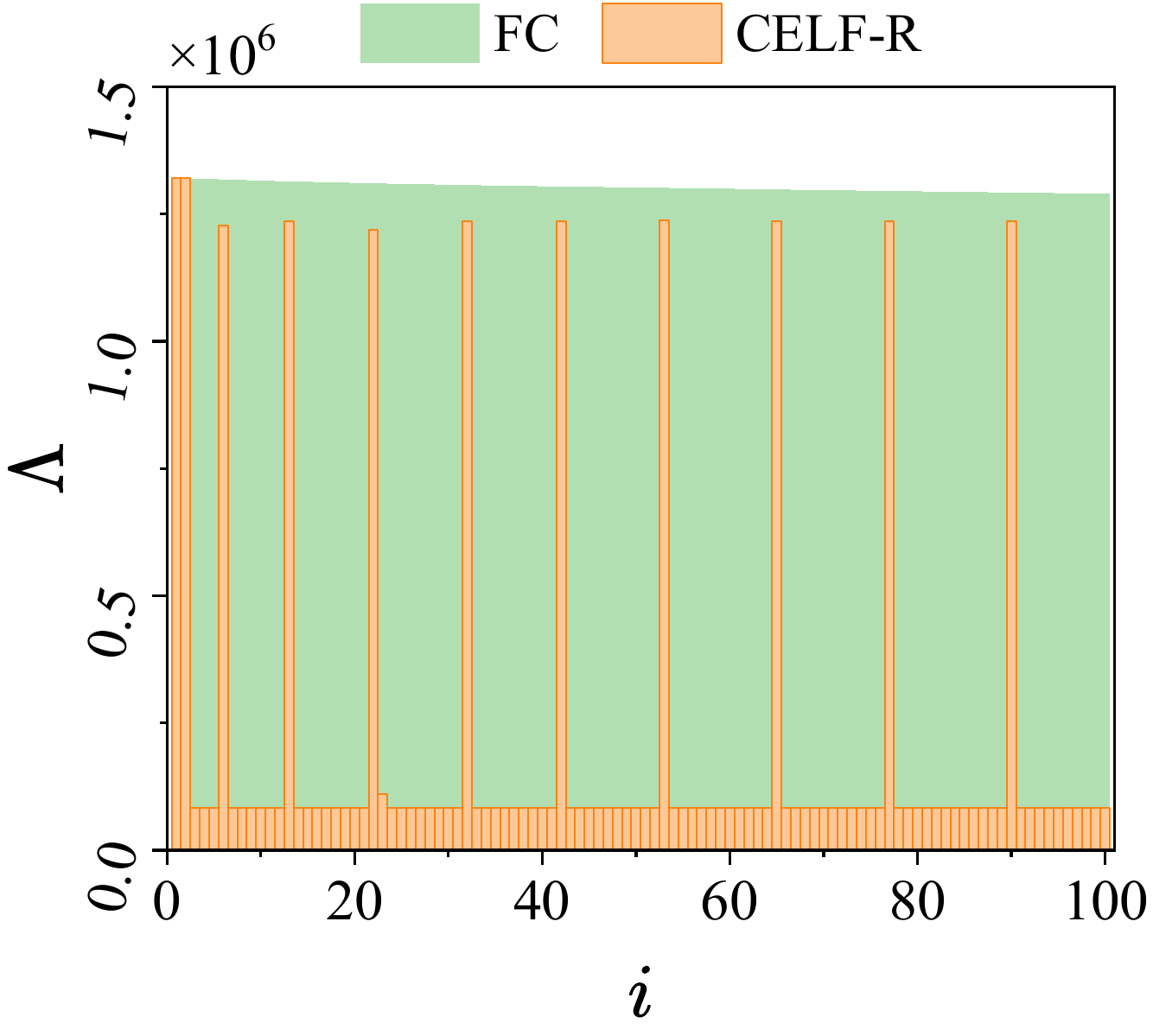}}
  % \caption{Computational efficiency of CELF-R and FC ($i$ denotes the iteration and $\Lambda$ represents the computations).}
%   \caption{Computational efficiency comparison between CELF-R and Full Computation (FC).
% The y-axis reports the total number of marginal gain evaluations $\Lambda$ at each iteration $i$.
% CELF-R consistently reduces the number of computations by more than 50\% compared to FC, demonstrating its effectiveness in eliminating redundant evaluations under approximate submodularity.}
  \caption{Computational efficiency comparison between CELF-R and Full Computation (FC).
The y-axis reports the total number of marginal gain evaluations $\Lambda$ at each iteration $i$.}

  \label{fig:computational savings}
\end{figure*}

\subsection{Seed Selection Strategy} %Ablation Study 
% To evaluate the effectiveness of CELF-R, we conduct experiments to compare its performance against two variants: CELF and FC (Full Computation).

% CELF is the original optimization strategy that assumes strict submodularity.
% FC selects the node with the highest marginal gain at each step by exhaustively examining all nodes, without leveraging any pruning operations.
% For a fair and computationally feasible comparison, all of them are parallelized and executed in the same batches.

% As shown in Figure~\ref{fig:ablation_fs}, CELF-R and FC produce nearly identical Pareto front, indicating that CELF-R achieves an accuracy comparable to exhaustively calculating marginal gains of all nodes.
% In contrast, CELF performs much worse, highlighting the importance of considering the approximate submodularity deviation $\epsilon$ in seed selection.
To evaluate the effectiveness of CELF-R, we conduct an ablation study comparing it against two variants: \textbf{CELF} and \textbf{FC} (Full Computation).
CELF is the original lazy greedy strategy that assumes strict submodularity.
FC selects the node with the highest marginal gain at each iteration by exhaustively evaluating all candidates, without any pruning or lazy updates.
For a fair and computationally feasible comparison, all methods are parallelized and executed using the same batch configurations.

As shown in Fig.~\ref{fig:ablation_fs}, CELF-R and FC produce nearly identical Pareto fronts, indicating that CELF-R achieves solution quality comparable to full marginal gain recomputation.
In contrast, CELF performs significantly worse, highlighting the necessity of explicitly accounting for the approximate submodularity deviation $\epsilon$ during seed selection.

\subsection{Efficiency Evaluation}
% To evaluate the computational efficiency gained through CELF-R, we compare the number of marginal gain computations required by CELF-R and FC.
% Fig.~\ref{fig:computational savings} exhibits the results where the green bars indicate the total number of computations required by FC, while the orange bars show those required by CELF-R.
% The results clearly demonstrate that CELF-R achieves substantial computational savings, reducing the number of evaluations by an average of 53.60\% to 56.43\%.
% This confirms the effectiveness of CELF-R in eliminating redundant computations, validating its design for efficient seed selection under approximate submodularity.

% In addition, we compare the runtime of iIMDP.
% Table~\ref{Tab:runtime} illustrates the runtime across all datasets where CELF-R consistently outperforms FC in efficiency, achieving a speedup of $48.2\%$ to $60.2\%$. 
% And extremely faster than the LP-based methods.
To quantify the computational efficiency gained by CELF-R, we compare the number of marginal gain evaluations required by CELF-R and FC.
Fig.~\ref{fig:computational savings} reports the results, where the green bars denote the total number of evaluations performed by FC and the orange bars correspond to CELF-R.
% CELF-R consistently achieves substantial computational savings, reducing the number of evaluations by an average of 53.60\% to 56.43\%.
These results confirm that CELF-R effectively eliminates redundant computations while preserving solution quality.

We further compare runtime performance against iIMDP.
Table~\ref{Tab:runtime} summarizes the results across all datasets.
CELF-R consistently outperforms FC, achieving speedups ranging from 48.2\% to 60.2\%, and is orders of magnitude faster than LP-based methods.
These results demonstrate that CELF-R makes fair influence blocking maximization practical on large-scale networks.

\begin{table}[!ht]
    \centering
    \caption{Runtime Comparisons ($k=100$)}
    \label{Tab:runtime}
    \begin{tabular}{ccccc}
    \toprule
    \textbf{Dataset} & \textbf{iIMDP}  & \textbf{CELF-R} & \textbf{FC} & \%Improv.\\
    \midrule
     Facebook &$5.13$s &$1.07$s &$2.15$s &$50.2$\%\\
     Slashdot &$1.1$h &$4.43$s &$11.07$s &$60.2$\%\\
     Gowalla &$6.5$h &$15.21$s &$32.97$s &$53.9$\%\\
     Pokec    &$>12$h  &$49.65$s  &$95.85$s &$48.2$\%\\
    \bottomrule
    \end{tabular}
\end{table}

\subsection{Empirical Bound}
% Theorem~\ref{therorem: K} provides conservative approximation guarantees for $K(\cdot)$. 
% Even so, in practical applications, we employ a dynamic update strategy, which typically results in a tighter empirical bound.
% Let $\psi_k$ denote the performance loss under budget constraint $k$, where $\psi_k$ is calculated by the accumulation of $\epsilon$ and $\kappa$ across all iterations, leading to $K(S_{P_{b^*}})\geq(1-1/e)K(S^*_P)-\psi_k$.
% Table~\ref{Tab:Estimator} shows the specific $\psi_k$ values on all datasets.
% In practice, its value varies a lot between different networks and seems highly relevant to the graph structure.
% Overall, on these four networks, it indicates that as the budget $k$ increases, $\psi_k$ rises slightly, with the overall performance loss remaining within an acceptable range.

Theorem~\ref{therorem: K} provides conservative worst-case approximation guarantees for $\mathcal{K}(\cdot)$, which assumes fixed upper bounds on deviation of $\epsilon_\text{max}$ and $\kappa_\text{max}$ at all times.
In practice, however, CELF-R employs a dynamic deviation tracking strategy that grows slightly towards these upper bounds, which typically yields much tighter empirical bounds.
Let $\psi_k$ denote the empirical bound under budget constraint $k$, computed from the cumulative deviations $\epsilon$ and $\kappa$ across all iterations.
% Let $\psi_k$ denote the accumulated performance loss under budget constraint $k$, computed from the cumulative deviations $\epsilon$ and $\kappa$ across all iterations.
% This leads to the empirical bound
% $\mathcal{K}(S_{P}) \ge (1 - 1/e)\mathcal{K}(S_P^*) - \psi_k.$

Table~\ref{Tab:Estimator} reports the observed $\psi_k$ values across all datasets.
While $\psi_k$ varies across networks, likely due to differences in graph structure, it remains relatively small in all cases.
% As the budget $k$ increases, $\psi_k$ grows moderately, indicating that the practical performance loss induced by approximate submodularity remains well controlled.
As the budget $k$ increases, $\psi_k$ grows moderately, indicating that the practical performance loss remains well controlled.

\begin{table}[!ht]
    \centering
    % \small
    \caption{$\psi_k$ under the constraint of budget $k$.}
    \label{Tab:Estimator}
    \begin{tabular}{cccccc}
    \toprule
    \textbf{Dataset} & \textbf{$\psi_{20}$} & \textbf{$\psi_{40}$} & \textbf{$\psi_{60}$}& \textbf{$\psi_{80}$} & \textbf{$\psi_{100}$} \\
    \midrule
     Facebook &0.018 &0.032 &0.058 &0.078 &0.095\\
     Slashdot &0.014 &0.033 &0.039 &0.055 &0.063\\
     Gowalla &0.011 &0.033 &0.052 &0.077 &0.102\\
     Pokec  &0.003 &0.007 &0.015 &0.023 &0.027\\
    \bottomrule
    \end{tabular}
\end{table}

%% file: 7Conclusion.tex
\section{Conclusion}
In this paper, we incorporate Demographic Parity (DP) into Influence Blocking Maximization and formalize the Fair Influence Blocking Maximization (FIBM) problem.
To overcome the reliance of traditional DP formulations on costly solvers, we propose a DP-aware objective that admits an approximately monotonic submodular structure.
Combined with blocking effectiveness via a tunable linear scalarization, this formulation enables efficient optimization and flexible control of fairness–effectiveness trade-offs with theoretical guarantees.
We further develop CELF-R, an efficient seed selection strategy tailored to this approximate structure, which naturally supports Pareto front construction under different preferences.
Experiments on four real-world networks demonstrate that CELF-R consistently outperforms state-of-the-art baselines, achieving a $(1-1/e-\psi)$-approximate solution while maintaining high efficiency.

%% file: 8Appendix.tex
\newpage

\appendix
\section{Appendix}
\subsection{Proofs of Lemmas and Theorems}

\lemmaWDP*
\begin{proof}
Since $x_c(S_P) =\frac{\sigma^-_c(S_P)}{\sigma^-(S_P)}$, it holds $\sum_{c\in\cC}x_c(S_P) = 1$.
Therefore, we construct the Lagrangian function as
\begin{equation}
    \Gamma(x_{c_1}, \cdots, x_{c_n}) = \sum_{c \in \cC} r_c x_c^\alpha + \lambda \left(1 - \sum_{c \in \cC} x_c\right).
\end{equation}

% %%% new contents
Taking the partial derivative for $x_c$, it obtains
\begin{equation}
    \frac{\partial \Gamma}{\partial x_c} = \alpha r_c x_c^{\alpha - 1} - \lambda.
\end{equation}

If we set $\lambda$ sufficiently large, then $\frac{\partial \Gamma}{\partial x_c} \leq 0$ holds for any $c\in \cC$. 
Therefore, $\mathcal{W}(\cdot)$ would be maximized when $\frac{\partial \Gamma}{\partial x_c} = 0$ for all $c\in \cC$.
In such a case, we have
\begin{equation}
   r_{c_1} x_{c_1}^{\alpha - 1} = \cdots = r_{c_n} x_{c_n}^{\alpha - 1} = \frac{\lambda}{\alpha}.\label{eq:equal_rn}
\end{equation}

Substituting $r_c = n_c^{1 - \alpha}$ into Eq.~(\ref{eq:equal_rn}) gives
\begin{equation}
    \left(\frac{x_{c_1}}{n_{c_1}}\right)^{\alpha - 1} = \cdots = \left(\frac{x_{c_n}}{n_{c_n}}\right)^{\alpha - 1}.
\end{equation}

Since $\alpha - 1 \neq 0$ and $\sum_{c\in\cC}x_c>0$, we have
\begin{equation}
    \frac{x_{c_1}}{n_{c_1}} = \cdots = \frac{x_{c_n}}{n_{c_n}} \neq 0.
\end{equation}

Recalling the definitions of $n_c$ and $x_c$ as
\begin{equation}
    n_c = \frac{\sigma_c(S_N, G)}{\sigma(S_N, G)}, \quad x_c = \frac{\sigma^-_c(S_P)}{\sigma^-(S_P)}.
\end{equation}

Hence, the proportion of blocked nodes relative to community size is equal across all communities:
\begin{equation}
    \frac{\sigma^-_{c_1}(S_P)}{\sigma_{c_1}(S_N,G)} = \cdots = \frac{\sigma^-_{c_n}(S_P)}{\sigma_{c_n}(S_N,G)}.
\end{equation}

This satisfies the definition of strict Demographic Parity (DP), which requires that the utility gap across all communities be exactly zero.
Thus, maximizing $\mathcal{W}(S_P)$ naturally induces a fair blocking allocation that obeys DP, which concludes the proof.
\end{proof}

\lemmaAlpha*
\begin{proof}
Let $h=\frac{x_c}{n_c} - 1$.
Recall that $\mathcal{W}_\alpha=\sum_{c \in \cC}n_c^{1-\alpha}x_c^\alpha$. 
Substituting $\frac{x_c}{n_c} - 1$ with $h$ gives
\begin{equation}
    \mathcal{W}_\alpha = \sum_{c\in \cC}n_c^{1-\alpha}n_c^\alpha(h+1)^\alpha = \sum_{c \in \cC}n_c(h+1)^\alpha.
\end{equation}

Applying Taylor expansion, we have
\begin{equation}
    (h + 1)^\alpha = 1 + \alpha h + \frac{\alpha (\alpha-1)}{2} h^2 + O(h^3).
\end{equation}

Since $\sum_{c\in \cC}n_c=\sum_{c\in \cC}x_c=1$, we obtain
\begin{equation}
    \sum_{c\in \cC}n_ch=\sum_{c\in \cC}x_c-\sum_{c \in \cC}n_c=0.
\end{equation}

Hence,
\begin{small}
\begin{equation}
    \begin{aligned}
    \mathcal{W}_\alpha &= \sum_{c\in \cC}n_c + \alpha\sum_{c\in \cC}n_ch + \frac{\alpha(\alpha-1)}{2}\sum_{c\in \cC}n_ch^2 + O(\sum_{c\in \cC}n_ch^3) \\
    &= 1 + \frac{\alpha(\alpha-1)}{2}\sum_{c\in \cC}n_ch^2 + O(\sum_{c\in \cC}n_ch^3).
    \end{aligned}
\end{equation}
\end{small}

For any $\alpha_1,\alpha_2\in(0,1)$, we have
\begin{equation}
\begin{aligned}
    |\mathcal{W}_{\alpha_1}-\mathcal{W}_{\alpha_2}| &= |\frac{\pi(\alpha)}{2}\sum_{c\in \cC}n_ch^2 + O(\sum_{c\in \cC}n_ch^3)| \\
    &\leq \frac{|\pi(\alpha)|}{2}\sum_{c\in \cC}n_ch^2 + O(\sum_{c\in \cC}n_c|h^3|),
\end{aligned}
\end{equation}
where $\pi(\alpha) = (\alpha_1-\alpha_2)(\alpha_1+\alpha_2-1)$.

Since $|h| \leq \phi$, we have
\begin{equation}
\begin{aligned}
    &\sum_{c \in \cC}n_ch^2 \leq \phi^2\sum_{c \in \cC}n_c = \phi^2. \\
    &\sum_{c \in \cC}n_c|h^3| \leq \phi^3\sum_{c \in \cC}n_c = \phi^3.
\end{aligned}
\end{equation}

Therefore,
\begin{equation}
     |\mathcal{W}_{\alpha_1}-\mathcal{W}_{\alpha2}| \leq \frac{|(\alpha_1-\alpha_2)(\alpha_1+\alpha_2-1)|}{2}\phi^2+O(\phi^3),
\end{equation}
which concludes the proof.
\end{proof}

\lemmaW*
\begin{proof}
\textbf{Approximate Submodularity}. For any subset $X \subseteq Y \subseteq V$ and element $u \notin Y$,  
\[
\mathcal{W}(X \cup \{u\}) - \mathcal{W}(X) \geq \mathcal{W}(Y \cup \{u\}) - \mathcal{W}(Y) - \epsilon.
\]
where $\epsilon$ quantifies the deviation from strict submodularity.

Let $x_c(S) = n_c + \delta_c(S)$, where $\delta_c(S)$ denotes the difference between $n_c$ and $x_c$. 
Then, the marginal gain of adding $u$ to $S$ is
\begin{equation}
\begin{aligned}
\Delta_S(u) &= \mathcal{W}(S \cup \{u\}) - \mathcal{W}(S) \\
&= \sum_{c \in \cC} n_c^{1 - \alpha} \big[ \big( n_c + \delta_c(S\cup\{u\}) \big)^\alpha - \big(x_c(S)\big)^\alpha \big] \\
&= \sum_{c \in \cC} n_c^{1 - \alpha} \big[ \left(x_c(S) +\delta_c^u(S) \big)^\alpha - \big(x_c(S)\big)^\alpha \right],
\end{aligned}
\end{equation}
where $\delta_c^u(S)$ denotes the additional influence brought by $u$ to community $c$.

Applying the Taylor expansion to $\big(x_c(S) + \delta_c^u(S) \big)^\alpha$, we get
\begin{equation}
\begin{aligned}
\begin{small}
\big(x_c(S) + \delta_c^u(S) \big)^\alpha = \big(x_c(S)\big)^\alpha + \alpha \big(x_c(S)\big)^{\alpha - 1} \delta_c^u(S) + R_c(S),
\end{small}
\end{aligned}
\end{equation}
with $R_c(S)$ given by the Lagrange form as follows:
\begin{equation}
\begin{aligned}
R_c(S) = \frac{\alpha(\alpha-1)\big(\xi_c\big)^{\alpha-2} }{2} \big(\delta_c^u(S)\big)^2, \xi_c \in (\xi_c^{min}, \xi_c^{max}),
\end{aligned}
\end{equation}
where $\xi_c^{min}=\min \big(x_c(S), x_c(S)+\delta_c^u(S) \big)$, and $\xi_c^{max}=\max \big(x_c(S), x_c(S)+\delta_c^u(S) \big)$.

Hence, we have
\begin{equation}
|R_c(S)| < \frac{\alpha(1 - \alpha)}{2}(\xi_c^{min})^{\alpha-2}\big(\delta_c^u(S)\big)^2,
\end{equation}
which leads to
\begin{equation}
\begin{aligned}
\Delta_S(u) &= \sum_{c \in \cC} n_c^{1 - \alpha} \big[ \alpha \big(x_c(S)\big)^{\alpha - 1} \delta_c^u(S) + R_c(S) \big].
\end{aligned}
\end{equation}

Again, applying the Taylor expansion to $\big(x_c(S)\big)^{\alpha - 1}$, {\em i.e.}, $ \big(n_c + \delta_c(S)\big)^{\alpha - 1}$, we get
\begin{equation}
\begin{small}
\big(n_c + \delta_c(S)\big)^{\alpha - 1} = n_c^{\alpha - 1} + (\alpha - 1) n_c^{\alpha - 2} \delta_c(S) + R_c'(S),
\end{small}
\end{equation}
with $R_c'(S)$ formulated as
\begin{equation}
\begin{small}
R_c'(S) = \frac{(\alpha-1)(\alpha-2)\big(\eta_c\big)^{\alpha-3} }{2} \big(\delta_c(S)\big)^2,
\end{small}
\end{equation}
where $\eta_c\in(\eta_c^{min},\eta_c^{max})$, $\eta_c^{min}= \min\big(n_c, n_c + \delta_c(S)\big)$, and $\eta_c^{max}= \max\big(n_c, n_c + \delta_c(S)\big)$.

Hence, we have
\begin{equation}
|R_c'(S)| \leq \frac{(\alpha-1)(\alpha-2)}{2} (\eta_c^{min})^{\alpha-3} \big(\delta_c(S)\big)^2.
\end{equation}

Substituting the above into $\Delta_S(u)$ leads to
\begin{equation}
\begin{aligned}
\Delta_S(u) &= \sum_{c \in \cC} n_c^{1 - \alpha} \big\{ \alpha \big(x_c(S)\big)^{\alpha - 1} \delta_c^u(S) + R_c(S) \big\} \\
&= \sum_{c \in \cC} n_c^{1-\alpha} \big\{ \alpha \big[ n_c^{\alpha - 1} + (\alpha - 1) n_c^{\alpha - 2} \delta_c(S) \\
& \quad\ + R_c'(S) \big] \delta_c^u(S) + R_c(S) \big\} \\
&= \sum_{c \in \cC} \big\{ \alpha \delta_c^u(S) + \frac{\alpha (\alpha - 1) \delta_c(S) \delta_c^u(S)}{n_c}  \\ 
& \quad\ +\underbrace{\alpha n_c^{1-\alpha} \delta_c^u(S) R_c'(S) + n_c^{1-\alpha} R_c(S)}_{\text{Remainder } E_c(S)} \big\}.
\end{aligned}
\end{equation}

The remainder $E_c(S)$ satisfies
\begin{equation}
\begin{aligned}
    |E_c(S)| 
    &= O\left( |\delta_c^u(S)| \cdot |\delta_c(S)|^2 + |\delta_c^u(S)|^2 \right) \\
    &\ll \left| \alpha \delta_c^u(S) + \frac{\alpha (\alpha - 1) \delta_c(S) \delta_c^u(S)}{n_c} \right|.
\end{aligned}
\end{equation}

Since $E_c(S)$ is negligible, we obtain
\begin{equation}
\Delta_S(u) = \alpha \sum_{c \in \cC} \left[ 1 + \frac{(\alpha - 1) \delta_c(S)}{n_c} \right] \delta_c^u(S).
\end{equation}

As $\sum_{c\in\cC} n_c=\sum_{c\in\cC} x_c = 1$, it holds $\sum_{c\in\cC} \delta_c^u(S)=0$. 
Therefore, we have
\begin{equation}
\Delta_S(u) = \alpha (\alpha - 1) \sum_{c \in \cC}\frac{\delta_c(S) \delta_c^u(S)}{n_c}.\label{eq:deltaT}
\end{equation}

For any subset $X \subseteq Y \subseteq V$ and element $u \notin Y$, the difference in marginal gains becomes
\begin{equation}
    \Delta_X(u) - \Delta_Y(u) = \alpha (\alpha - 1) \sum_{c \in \cC} \frac{\delta_c(X) \delta_c^u(X) - \delta_c(Y) \delta_c^u(Y)}{n_c}.
\end{equation}

% Assuming $ |\delta_c(S)| \leq \delta_{\text{max}} $ and $ |\delta_c^u(S)| \leq \delta_{\text{max}}' $, we bound the difference as:
Let $\delta_{\text{max}} = \max_{c\in\cC} |\delta_c(S)|$ and $\delta^u_{\text{max}} = \max_{c\in\cC} |\delta_c^u(S)|$, then we can bound the different as follows:
\begin{equation}
    |\Delta_X(u) - \Delta_Y(u)| \leq 2 \alpha (1 - \alpha) \delta_{\text{max}} \delta^u_{\text{max}} \sum_{c \in \cC} \frac{1}{n_c}.
\end{equation}

Thus, we have
\begin{equation}
\epsilon = 2 \alpha (1 - \alpha) \delta_{\text{max}} \delta^u_{\text{max}} \sum_{c \in \cC} \frac{1}{n_c}.
\end{equation}

\textbf{Approximate Monotonicity}. For any subset $X \subseteq V$ and element $u \notin X$,  
\[
\mathcal{W}(X \cup \{u\}) \geq \mathcal{W}(X) - \kappa.
\]

Similar to the proof of approximate submodularity, we have
\begin{equation}
\begin{aligned}
\mathcal{W}(S \cup \{u\}) - \mathcal{W}(S) &= \alpha (\alpha - 1) \sum_{c \in \cC}\frac{\delta_c(S) \delta_c^u(S)}{n_c} \\
&\geq \alpha(\alpha - 1)\sum_{c \in \cC}\frac{\delta_{\text{max}}\delta^u_{\text{max}}}{n_c}.
\end{aligned}
\end{equation}

Thus, we have
\begin{equation}
\kappa=\alpha(1-\alpha)\delta_{\text{max}}\delta^u_{\text{max}}\sum_{c \in \cC}\frac{1}{n_c}.
\end{equation}

Therefore, $\mathcal{W}(\cdot)$ satisfies both approximate submodularity and approximate monotonicity, which concludes the proof.
\end{proof}

\lemmaopt*
\begin{proof}
Let $\chi^*$ be an optimal solution with $|\chi^*|=|\chi|$, {\em i.e.}, $\mathcal{K}(\chi^*)=OPT$, we denote the elements in $\chi \backslash \chi^*$ by $u_1^*, u_2^*, ..., u_l^*$ and the elements in $\chi^* \backslash \chi$ by $v_1^*, v_2^*, ..., v_l^*$, where $|\chi \backslash \chi^*|=|\chi^* \backslash \chi|=l \leq k$.

Since $\mathcal{K}(\cdot)$ satisfies approximate monotonicity, we have
\begin{equation}
\begin{aligned}
    \mathcal{K}(\chi^* \cup \chi) &= \mathcal{K}(\chi^* \cup \{u_1^*, u_2^*, ..., u_l^*\}) \\
    &\geq \mathcal{K}(\chi^* \cup \{u_1^*, u_2^*, ..., u_{l-1}^*\}) - \kappa \\
    &\geq \mathcal{K}(\chi^*) - l\kappa.
\end{aligned}
\end{equation}

Combining approximate submodularity, we obtain
\begin{equation}
\begin{aligned}
    \mathcal{K}(\chi^*) &- \mathcal{K}(\chi) \\
    &\leq \mathcal{K}(\chi \cup \chi^*) - \mathcal{K}(\chi) + l\kappa\\
     &= \mathcal{K}(\chi \cup \{v_1^*, v_2^*, ..., v_l^*\}) - \mathcal{K}(\chi) + l\kappa \\
     &= \sum_{i=1}^l\big(\mathcal{K}(\chi \cup \{v_1^*, v_2^*, ..., v_i^*\}) \\ &- \mathcal{K}(\chi \cup \{v_1^*, v_2^*, ..., v_{i-1}^*\})\big) + l\kappa \\
     &\leq \sum_{i=1}^l\big(\mathcal{K}(\chi \cup \{v_i^*\}) - \mathcal{K}(\chi) + \epsilon\big) + l\kappa.
\end{aligned}
\end{equation}

Therefore, for a specific element $v^*$ that satisfies the condition $v\notin\chi$, it holds
\begin{equation}\label{eq:approx-asm}
\begin{aligned}
    \mathcal{K}(\chi \cup \{v^*\}) &- \mathcal{K}(\chi) \\ 
     &\geq \frac{1}{l}\big(\mathcal{K}(\chi^*)-\mathcal{K}(\chi) - l\kappa\big) - \epsilon \\
     &\geq \frac{1}{l}\big(\mathcal{K}(\chi^*)-\mathcal{K}(\chi)\big)-\kappa -\epsilon \\
     &\geq \frac{1}{k}\big(OPT- \mathcal{K}(\chi)\big) - \kappa - \epsilon.
\end{aligned}
\end{equation}
which concludes the proof.
\end{proof}

\theoremK*
\begin{proof}
Based on Lemma~\ref{lemma:opt}, we have
\begin{equation}\label{eq:opt-greedy}
\begin{aligned}
    \mathcal{K}(\chi^*) &- \mathcal{K}(\chi \cup \{v^*\}) \\ 
    &\leq \mathcal{K}(\chi^*) - (1-\frac{1}{k})\mathcal{K}(\chi) - \frac{1}{k}\big(\mathcal{K}(\chi^*) - k\kappa - k\epsilon\big) \\
    &= (1-\frac{1}{k})\big(\mathcal{K}(\chi^*) - \mathcal{K}(\chi)\big) + \kappa + \epsilon.
\end{aligned}
\end{equation}

Let $\Delta_i = \mathcal{K}(\chi^*) - \mathcal{K}(\chi_i)$, where $\chi_i$ is the greedy solution after $i$ steps.
Then the above Eq.~\eqref{eq:opt-greedy} implies
\begin{equation}
    \Delta_{i+1} \leq (1 - \frac{1}{k})\Delta_i + \kappa + \epsilon,
\end{equation}
which leads to
\begin{equation}
    \Delta_k \leq (1 - \frac{1}{k})^k\mathcal{K}(\chi^*) + (\kappa + \epsilon)k[1-(1 - \frac{1}{k})^k].
\end{equation}

Then we have
\begin{equation}
\begin{aligned}
    \mathcal{K}(\chi) &= \mathcal{K}(\chi^*) - \Delta_k \\
    &\geq [1 - (1 - \frac{1}{k})^k][\mathcal{K}(\chi^*) - (\kappa + \epsilon)k] \\
    &\geq (1 - 1/e) \cdot \big(\mathcal{K}(\chi^*) - k\kappa - k\epsilon \big).
\end{aligned}
\end{equation}

Let $\psi = (1 - 1/e)\cdot(k\kappa + k\epsilon)$.
Thus
\begin{equation}
    \mathcal{K}(\chi) \ge (1-1/e) \cdot \mathcal{K}(\chi^*) - \psi,
\end{equation}
which concludes the proof.
\end{proof}